\newcommand{\bigoh}{\mathcal{O}}
\newcommand{\clos}{\ensuremath{\text{clos}}}
\newcommand{\td}{td}
\newcommand{\qnp}{\textsc{Queue Number}\xspace}
\newtheorem{observation}{Observation}
\DeclareMathOperator{\qn}{qn}
\title{Parameterized Algorithms\\ for Queue Layouts\thanks{Research of FM partially supported by Dip. Ingegneria Univ. Perugia, RICBA19FM: ``Modelli, algoritmi e sistemi per la visualizzazione di grafi e reti''. RG acknowledges support from the Austrian Science Fund (FWF) grant P 31336, SB and MN acknowledge support from FWF grant P 31119.}}
\author{Sujoy Bhore\inst{1,2} 
\and Robert Ganian\inst{1} 
\and Fabrizio Montecchiani\inst{3} 
\and Martin~N\"ollenburg\inst{1} 
}
\institute{Algorithms and Complexity Group, TU Wien, Vienna, Austria\\\email{\{sujoy,rganian,noellenburg\}@ac.tuwien.ac.at}
\and Universit\'{e} libre de Bruxelles (ULB), Bruxelles, Belgium
\and
Dipartimento di Ingegneria, Universit\`{a} degli Studi di Perugia, Perugia, Italy\\\email{fabrizio.montecchiani@unipg.it}
}
\begin{document}

\maketitle

\begin{abstract}
 
An \emph{$h$-queue layout} of a graph $G$ consists of a \emph{linear order} of its vertices and a partition of its edges into $h$ \emph{queues}, such that no two independent edges of the same queue nest. The minimum $h$ such that $G$ admits an $h$-queue layout is the \emph{queue number} of $G$. We present two fixed-parameter tractable algorithms that exploit structural properties of graphs to compute optimal queue layouts. As our first result, we show that deciding whether a graph $G$ has queue number $1$ and computing a corresponding layout is fixed-parameter tractable when parameterized by the treedepth of $G$. Our second result then uses a more restrictive parameter, the vertex cover number, to solve the problem for arbitrary~$h$.

\keywords{Queue number  \and Parameterized complexity \and Treedepth \and Vertex Cover Number \and Kernelization}
\end{abstract}

\section{Introduction}
An \emph{$h$-queue layout} of a graph $G$ is a linear layout of $G$ consisting of a \emph{linear order} of its vertices and a partition of its edges into \emph{queues}, such that no two independent edges of the same queue nest~\cite{HR92}; see \cref{fig:intro}  for an illustration.  The \emph{queue number} $\qn(G)$ of a graph $G$ is the minimum number of queues in any queue layout of $G$. While such linear layouts represent an abstraction of various problems such as, for instance, sorting and scheduling~\cite{T72,BCLR96}, they also play a central role in three-dimensional graph drawing. It is known that a graph class has bounded queue number if and only if every  graph in this class has a three-dimensional crossing-free straight-line grid drawing in linear volume~\cite{DBLP:journals/comgeo/GiacomoLM05,DBLP:journals/siamcomp/DujmovicMW05}. We refer the reader to~\cite{DBLP:journals/dmtcs/DujmovicW04,Pem92} for further references and applications. Moreover, it is worth recalling that \emph{stack layouts}~\cite{Oll73,Yan89} (or \emph{book embeddings}), which allow nesting edges but forbid edge crossings, form the ``dual'' concept of queue layouts.

\begin{figure}
\centering 
\includegraphics[page=1]{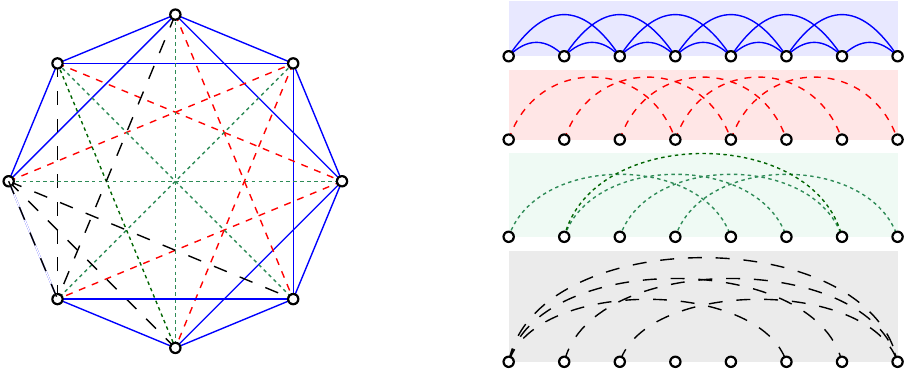}
\caption{A $4$-queue layout of $K_8$.\label{fig:intro}}
\end{figure} 

A rich body of literature is concerned with the study of upper bounds for the queue number of several planar and non-planar graph families (see, e.g.,~\cite{BDDEW18,DBLP:journals/siamcomp/BekosFGMMRU19,BFP13,Duj15,DBLP:conf/focs/DujmovicJMMUW19,DBLP:journals/siamcomp/DujmovicMW05,HLR92,Wie17} and also~\cite{DMW17} for additional references). For instance, a graph of treewidth $w$ has queue number at most $\bigoh(2^w)$~\cite{Wie17}, while every proper minor-closed class of graphs (including planar graphs) has constant queue  number~\cite{DBLP:conf/focs/DujmovicJMMUW19}.

Of particular interest to us is the corresponding recognition problem, which we denote by \qnp: Given a graph $G$ and a positive integer $h$, decide whether $G$ admits an $h$-queue layout. In 1992, in a seminal paper, Heath and Rosenberg proved that $1$-\qnp, i.e., the restriction of \qnp\ to $h=1$, is \NP-complete~\cite{HR92}. In particular, they characterized the graphs that admit queue layouts with only one queue as the arched leveled-planar graphs, and showed that the recognition of these graphs is \NP-complete~\cite{HR92}.

Since \qnp\ is \NP-complete even for a single queue, it is natural to ask under which conditions  the problem can be solved efficiently. For instance, it is known that if the linear order of the vertices is given (and the aim is thus to simply  partition the edges of the graph into queues), then the problem becomes solvable in polynomial time~\cite{HLR92}. We follow up on recent work made for the stack number~\cite{JGAA-526} and initiate the study of the parameterized complexity of \qnp\ by asking under which parameterizations  the problem is fixed-parameter tractable. 
In other words, we are interested in whether ($1$-)\qnp\ can be solved in time $f(k)\cdot n^{\bigoh(1)}$ for some computable function $f$ of the considered structural parameter $k$ of the $n$-vertex input graph~$G$. 

As our main result, we show $1$-\qnp\ is fixed-parameter tractable parameterized by the \emph{treedepth} of the input graph (\cref{sec:td}). We remark that treedepth is a fundamental graph parameter with close ties to the theory of graph sparsity (see, e.g.,~\cite{DBLP:books/daglib/0030491}). The main technique used by the algorithm is iterative pruning, where we recursively identify irrelevant parts of the input and remove these until we obtain a bounded-size equivalent instance (a \emph{kernel}) solvable by brute force. While the iterative pruning technique has already been used in a few other algorithms that exploit treedepth~\cite{DBLP:journals/ai/GanianO18,DBLP:journals/siamdm/GutinJW16,GanianPeitlSlivovskySzeider20}, the unique challenge here lay in establishing that the removal of seemingly irrelevant parts of the graph cannot change NO-instances to YES-instances. The proof of this claim, formalized in Lemma~\ref{lem:main}, uses a new type of block decomposition of $1$-queue layouts. 

For our second result, we turn to the general \qnp\ problem. Here, we establish fixed-parameter tractability when parameterized by a larger parameter, namely the \emph{vertex cover number} (\cref{sec:vc}). This result is also achieved by kernelization and forms a natural counterpart to the recently established fixed-parameter tractability of computing the stack number under the same parameterization~\cite{JGAA-526}, although the technical arguments and steps of the proof differ due to the specific properties of queue layouts.

\smallskip\noindent{\em Note:} Full proofs of statements marked with (*) can be found in the appendix.

\section{Preliminaries}	\label{sec:prel} 
We can assume that our input graphs are connected, as the queue number of a graph is the maximum queue number over all its connected components. Given a graph $G=(V,E)$ and a vertex $v \in V$,  let $N(v)$ be the set of neighbors of $v$ in $G$. Also, for $r \in \mathbb{N}$, we denote by $[r]$ the set $\{1, \ldots, r\}$. An \emph{$h$-queue layout} of $G$ is a pair $\langle \prec, \sigma \rangle$, where $\prec$ is a linear order of $V$, and $\sigma \colon E \rightarrow [h]$ is a function that maps each edge of $E$ to one of $h$ queues. In an $h$-queue layout $\langle \prec, \sigma \rangle$ of $G$, it is required that no two independent edges in the same queue \emph{nest}, that is, for no pair of edges $uv, wx \in E$ with four distinct end-vertices and $\sigma(uv) = \sigma(wx)$, the vertices are ordered as $u \prec w \prec x \prec v$. 
Given two distinct vertices $u$ and $v$ of $G$, $u$ is to the \emph{left} of $v$ if $u \prec v$, else $u$ is to the \emph{right} of $v$. Note that a $1$-queue layout of $G$ is simply defined by a linear order $\prec$ of $V$ and $\sigma \equiv 1$. 

\smallskip We assume familiarity with basic notions in parameterized complexity~\cite{DBLP:series/txcs/DowneyF13,DBLP:books/sp/CyganFKLMPPS15}.

\medskip\noindent \textbf{Treedepth.} Treedepth is a parameter closely related to treewidth, and the structure of graphs
of bounded treedepth is well understood~\cite{DBLP:books/daglib/0030491}.
%
We formalize a few notions needed to define treedepth, see also \cref{fig:treedepth} for an illustration.
A \emph{rooted forest} $\mathcal F$ is a disjoint union of rooted trees.
For a vertex~$x$ in a tree~$T$ 
of $\mathcal F$, the \emph{height} (or {\em depth})
of~$x$ in $\mathcal F$ is the number of vertices in the path from 
the root of~$T$ to~$x$.
The \emph{height of a rooted forest} is the maximum height of a vertex of the forest. 
Let $V(T)$ be the vertex set of any tree $T \in \cal F$.
\begin{definition}[Treedepth]\label{def:td}
  Let the \emph{closure} of a rooted forest~$\cal F$ be the graph
  $\clos({\cal F})=(V_c,E_c)$ with the vertex set 
  $V_c=\bigcup_{T \in \cal F} V(T)$ and the edge set
  $E_c=\{xy \mid \text{$x$ is an ancestor of $y$ in some $T\in\cal F$}\}$.
  A \emph{treedepth decomposition}
  of a graph $G$ is a rooted forest $\cal F$ 
  such that $G \subseteq \clos(\cal F)$.
  The \emph{treedepth} $\td(G)$ of a graph~$G$ is the minimum height of
  any treedepth decomposition of $G$.
\end{definition}

\noindent An optimal treedepth decomposition can be computed by an FPT algorithm.
\begin{proposition}[\cite{ReidlRVS14} ]\label{pro:compute-td}
  Given an $n$-vertex graph $G$ and an integer $k$, it is possible to
  decide whether $G$ has treedepth at most $k$, and if so, to compute an
  optimal treedepth decomposition of $G$ in time $2^{\bigoh(k^2)}\cdot n$.
\end{proposition}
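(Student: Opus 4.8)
\medskip\noindent\emph{Proof strategy.}
The statement bundles deciding whether $\td(G)\le k$ with returning, when it holds, a decomposition of minimum height, and the plan is a ``compress, then dynamic-program'' approach. The first step is a cheap reduction to the bounded-treewidth case, driven by two elementary facts. A path on $t$ vertices has treedepth exactly $\lceil\log_2(t+1)\rceil$, so a graph with $\td(G)\le k$ contains no path on $2^k$ vertices; hence if a DFS from an arbitrary vertex (in each component) produces a tree of height at least $2^k$ --- whose root-to-leaf branches are paths of $G$ --- one rejects immediately. Moreover $\td(G)\le k$ implies $\mathrm{tw}(G)\le k-1$, since the closure of a height-$k$ rooted forest has the tree decomposition whose underlying tree is the forest itself and whose bag at a node $x$ is $x$ together with its ancestors, and $G$ embeds into such a closure. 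After the DFS check I would therefore invoke the known $2^{\bigoh(k)}\cdot n$-time algorithm that, given $k$, either certifies $\mathrm{tw}(G)>k-1$ --- in which case we reject --- or returns a tree decomposition $(T,\{B_x\})$ of $G$ of width $\bigoh(k)$.

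Next I would run a bottom-up dynamic program on $(T,\{B_x\})$. The record kept at a node $x$ is the set of feasible \emph{partial profiles}: a partial profile assigns to the vertices of $B_x$ a tentative depth in $[k]\cup\{\bot\}$ together with the ancestor/descendant pattern they would exhibit in a height-$\le k$ elimination forest, and it is feasible if the subgraph spanned by the vertices introduced in the subtree below $x$ admits a treedepth decomposition of height $\le k$ that is consistent with the profile on $B_x$ and in which every already-forgotten vertex is fully resolved (all of its neighbours lie among its ancestors or descendants in the partial forest). Introduce, forget and join are the usual transitions for this kind of order/forest dynamic program; the join intersects the profiles of the two children after checking that their depth assignments and ancestor patterns agree on the shared bag. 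Since $|B_x|=\bigoh(k)$, a bag carries at most $2^{\bigoh(k\log k)}$ profiles, so the whole pass runs in $2^{\bigoh(k\log k)}\cdot n$ time, well within the claimed bound $2^{\bigoh(k^2)}\cdot n$. At the root of $T$ one has $\td(G)\le k$ iff some feasible profile survives; storing inside each profile the smallest height actually used then yields $\td(G)$ exactly, and back-pointers reconstruct an optimal elimination forest (disconnected inputs are handled component by component).

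The DFS test and the width reduction are routine; the real obstacle is getting the dynamic program right, namely pinning down exactly how much of the partial elimination forest a profile must remember. Recording only the depth function is too little --- the join needs the induced ancestor/descendant relation on the bag to be sound --- while recording more would inflate the state count, so one must prove that the ancestor/descendant pattern on the $\bigoh(k)$ bag vertices is a complete interface and that the transitions preserve feasibility. (The cited paper instead avoids the treewidth black box and runs a comparable dynamic program directly over the DFS tree, where the analogous difficulty is to bound the number of relevant boundary configurations of a DFS subtree by $2^{\bigoh(k^2)}$ even though that tree may have depth $2^k$; this is done by exploiting that the boundary vertices of a subtree occupy only $k$ levels of any height-$\le k$ elimination forest and form a laminar family.)
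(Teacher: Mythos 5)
This proposition is cited from Reidl, Rossmanith, S\'anchez Villaamil and Sikdar (2014); the paper contains no proof of its own, so there is nothing internal to compare against, and your attempt has to be judged on its own merits.

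Your outer scaffolding is sound: the DFS-tree rejection test (via the fact that $\td(G)\le k$ forbids long paths) and the reduction $\td(G)\le k \Rightarrow \mathrm{tw}(G)\le k-1$, followed by a call to a single-exponential treewidth algorithm, are all standard and correct. The problem is in the dynamic program, precisely at the spot you yourself flag as ``the real obstacle.'' The profile you propose to keep at a bag $B_x$ --- a tentative depth in $[k]$ for each bag vertex together with the ancestor/descendant pattern among bag vertices --- is not a complete interface, and the claimed state bound $2^{\bigoh(k\log k)}$ is a symptom of that. Concretely, consider the introduce step: a newly introduced vertex $u$ may need to be placed as an ancestor of some bag vertex $v$ at a prescribed depth $d<\mathrm{depth}(v)$. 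Whether that is even possible depends on whether the depth-$d$ ancestor of $v$ in the partial elimination forest is already occupied by a \emph{forgotten} vertex, and your profile records nothing about forgotten vertices. Forgotten neighbours of $v$ do occupy such slots (they had to be placed above or below $v$ when they were forgotten), so two subtrees can yield identical profiles under your definition while one of them has no room left on the chain above $v$ and the other does; the join and introduce transitions would then conflate a feasible and an infeasible situation. Tracking, per bag vertex and per level, whether the corresponding ancestor position is a bag vertex, a forgotten vertex, or still free is roughly $\bigoh(k^2)$ bits of extra information, which lands you at $2^{\bigoh(k^2)}$ states --- this is exactly where the $2^{\bigoh(k^2)}$ in the proposition comes from, and it is not something one gets to shave down to $2^{\bigoh(k\log k)}$ by the argument you give. (One also has to argue that this augmented profile \emph{is} complete, which is the substantive lemma your sketch leaves open.) So the gap is real: the DP state is underspecified, the feasibility invariant cannot be maintained with the profile as written, and the $2^{\bigoh(k\log k)}$ figure should be viewed as a red flag rather than a bonus.
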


\begin{proposition}[\cite{DBLP:books/daglib/0030491} ]\label{pro:path-td}
Let $G$ be a graph and $\td(G) \le k$. Then~$G$ has no path of length~$2^k$.
\end{proposition}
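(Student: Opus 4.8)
The plan is to establish the slightly stronger statement that a graph of treedepth at most $k$ contains no path on $2^k$ vertices as a subgraph; since a path of length $2^k$ has $2^k+1$ vertices, this immediately implies the claim. I would proceed by induction on $k$. In the base case $k=1$, a treedepth decomposition of height $1$ consists only of isolated vertices, so $G$ has no edge and hence no path on $2^1=2$ vertices.

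For the inductive step ($k\ge 2$), fix a treedepth decomposition $\mathcal F$ of $G$ of height at most $k$ and suppose, for contradiction, that $G$ contains a path $P$ on $2^k$ vertices. Since $\clos(\mathcal F)$ has edges only between vertices of a common tree of $\mathcal F$, the connected subgraph $P$ is contained in a single tree $T\in\mathcal F$; restricting $G$ to $V(T)$ and $\mathcal F$ to $\{T\}$, we may assume $\mathcal F$ is the single tree $T$ with root $r$. The key structural fact is then that every connected component of $G-r$ is a subgraph of the closure of a subtree of $T$ rooted at a child of $r$, and each such subtree has height at most $k-1$; hence every connected component of $G-r$ has treedepth at most $k-1$, so the induction hypothesis applies to it.

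It then suffices to find a sub-path of $P$ on $2^{k-1}$ vertices inside a single component of $G-r$. If $r\notin V(P)$, then $P$ is itself a connected subgraph of $G-r$, so it lies in one component of treedepth at most $k-1$; as $P$ has $2^k\ge 2^{k-1}$ vertices it contains a sub-path on $2^{k-1}$ vertices, contradicting the induction hypothesis. If $r\in V(P)$, then deleting $r$ breaks $P$ into at most two vertex-disjoint sub-paths whose numbers of vertices sum to $2^k-1$; the larger of the two has at least $\lceil(2^k-1)/2\rceil=2^{k-1}$ vertices and lies inside a single component of $G-r$ of treedepth at most $k-1$, again contradicting the induction hypothesis. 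This contradiction finishes the induction, and the proposition follows.

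I expect the only delicate point to be the arithmetic in the final case: splitting a path on $2^k$ vertices at a single internal vertex must leave a piece on at least $2^{k-1}$ vertices, and it is precisely this halving at the root that forces the base $2$ in the bound $2^k$. An alternative, essentially equivalent, route would be to prove directly that $\td(P_n)=\lceil\log_2(n+1)\rceil$ and then invoke monotonicity of treedepth under taking subgraphs; but the induction above is self-contained and avoids computing the exact treedepth of paths.
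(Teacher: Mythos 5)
The paper does not actually prove this proposition; it is stated as a known fact with a citation to Ne\v{s}et\v{r}il and Ossona de Mendez's book, so there is no in-paper argument to compare against. Your proof is correct and is, in fact, the standard argument: observe that a connected subgraph lives inside a single tree of the decomposition, delete the root $r$, note that each component of $G-r$ inherits a decomposition of height $\le k-1$ from the subtree rooted at a child of $r$, and then argue that deleting a single internal vertex from a path on $2^k$ vertices must leave a subpath on at least $\lceil(2^k-1)/2\rceil=2^{k-1}$ vertices, closing the induction. Your strengthening from ``length $2^k$'' (i.e., $2^k+1$ vertices) to ``$2^k$ vertices'' is harmless and in fact matches the tight value $\td(P_n)=\lceil\log_2(n+1)\rceil$ that you mention as an alternative route. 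Everything checks out; no gaps.
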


%

\medskip\noindent \textbf{Vertex cover number.} A \emph{vertex cover} $C$ of a graph $G=(V,E)$ is a subset  $C \subseteq V$ such that each edge in $E$ has at least one incident vertex in $C$. The \emph{vertex cover number} of $G$, denoted by $\tau(G)$, is the size of a minimum vertex cover of~$G$. Observe that $\td(G)\leq \tau(G)+1$: it suffices to build $\cal F$ as a single path with vertex set $C$ and with leaves $V \setminus C$.
Computing an optimal vertex cover of $G$ is FPT.

\begin{proposition}[\cite{DBLP:journals/tcs/ChenKX10} ]\label{pro:compute-vcn}
Given an $n$-vertex graph $G$ and a constant $\tau$, it is possible to decide whether $G$ has vertex cover number at most $\tau$, and if so, to compute a vertex cover $C$ of size $\tau$ of $G$ in time $\bigoh(2^\tau+\tau\cdot n)$.
\end{proposition}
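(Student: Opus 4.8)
This is a classical result on \textsc{Vertex Cover}, so I will only sketch how one would prove it from scratch; in the paper itself it of course suffices to invoke \cite{DBLP:journals/tcs/ChenKX10}. The plan is to combine a kernelization with a bounded search tree. First I would apply the Buss reductions to $G$ while maintaining a partial cover and a residual budget (initially $\tau$): repeatedly delete isolated vertices, since they never belong to a minimum cover; and repeatedly pick a vertex $v$ whose degree exceeds the current budget, observe that $v$ must lie in every cover respecting the budget (otherwise all of its too-many neighbours would), add $v$ to the partial cover, delete it, and decrement the budget. With a little care — for each vertex one only needs to decide whether its degree exceeds the current threshold, which costs $\bigoh(\tau)$ time per vertex — this phase runs in time $\bigoh(\tau\cdot n)$. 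When it terminates, every remaining vertex has degree at most the residual budget $\tau'\le\tau$; hence if more than $\tau'^2$ edges remain we reject, and otherwise the residual instance has $\bigoh(\tau^2)$ non-isolated vertices and edges.

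Second, on this kernel I would branch: if the budget is negative, reject; if no edges remain, the partial cover is a valid cover, so accept; otherwise choose any remaining edge $uv$ and recurse into two branches, one adding $u$ to the cover (and deleting $u$ together with its incident edges), one adding $v$ symmetrically, decrementing the budget in each. The budget strictly decreases along every root-to-leaf path and stays at most $\tau$, so the tree has at most $2^\tau$ leaves with $\tau^{\bigoh(1)}$ work per node, giving running time $\bigoh(2^\tau\cdot\tau^{\bigoh(1)}+\tau\cdot n)$ — already enough for fixed-parameter tractability. The sharper bound $\bigoh(2^\tau+\tau\cdot n)$ in the statement is obtained by replacing the naive two-way branching with the refined case distinctions of Chen, Kanj and Xia (branching preferentially on high-degree vertices, folding degree-one and degree-two vertices, and so on), whose recurrences bring the leaf count down to $\bigoh(c^\tau)$ for a constant $c<2$ while doing only constant amortized work per node on the kernelized instance.

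The proof has essentially no obstacle of the kind that would warrant detail here; the one point requiring attention is purely quantitative, namely keeping the dependence on $n$ additive by performing the kernelization once up front in time linear in $n$ and never touching the original graph afterwards, and trimming the polynomial prefactor of the exponential term — which is exactly what the cited analysis accomplishes and which is orthogonal to everything else in this paper.
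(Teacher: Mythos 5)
The paper does not prove this proposition; it simply invokes the citation, as is appropriate for a well-known black-box result. Your sketch of how one \emph{would} prove it is sound and is the standard argument: a Buss-style kernelization (delete isolated vertices; force any vertex of degree greater than the residual budget into the cover; reject if more than $\tau'^2$ edges remain) followed by a bounded search tree on the $\bigoh(\tau^2)$-size kernel. You also correctly flag the one subtlety hiding in the stated time bound: naive two-way branching on an edge gives $\bigoh(2^\tau\cdot\tau^{\bigoh(1)}+\tau n)$, which is \emph{not} literally $\bigoh(2^\tau+\tau n)$; to write the exponential term as a clean $\bigoh(2^\tau)$ one needs a branching with base strictly below $2$, so that the polynomial factor on the kernel is absorbed (Chen--Kanj--Xia achieve base roughly $1.2738$, so $c^\tau\cdot\tau^{\bigoh(1)}=\bigoh(2^\tau)$). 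One could be a touch more careful about realizing the kernelization phase in a single linear-time pass, since the budget drops as you force vertices in, but this is a bookkeeping matter and does not affect correctness. In short, nothing to compare against on the paper's side, and your proposal is a correct and appropriately self-aware sketch of the classical proof.
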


\section{Parameterization by Treedepth}\label{sec:td}

In this section, we establish our main result: the fixed-parameter tractability of $1$-\qnp\ parameterized by treedepth. We formalize the~statement~below.

\begin{theorem}\label{thm:main}
Let $G$ be a graph with $n$ vertices and constant treedepth $k$. We can decide in $\bigoh(n)$ time whether $G$ has queue number one, and, if this is the case, we can also output a $1$-queue layout of $G$. 
\end{theorem}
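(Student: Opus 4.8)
The plan is to prove \cref{thm:main} by kernelization: I would reduce $G$ in linear time to an equivalent instance of size bounded by a function of $k$ alone, and then decide that kernel by brute force. First I would assume $G$ is connected and invoke \cref{pro:compute-td} to obtain, in $O(n)$ time, an optimal treedepth decomposition, which for connected $G$ is a single rooted tree $T$ of height at most $k$ (this incidentally bounds $|E(G)|$ by $(k-1)n = O(n)$, since every vertex has at most $k-1$ ancestors in $T$). For a node $x$ of $T$, let $T_x$ be the subtree rooted at $x$ and let $P_x$ consist of $x$ together with its proper ancestors, so $|P_x| \le k$. The structural fact I would lean on throughout is that any edge of $G$ joining a descendant of $x$ to a vertex outside $T_x$ has its outer endpoint in $P_x$; in particular, the subtrees hanging off the children of a node $v$ are pairwise non-adjacent and attach to the rest of $G$ only through $P_v$.

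Next I would set up the reduction rule and process $T$ bottom-up, maintaining the invariant that every processed subtree has at most $g(k)$ vertices for a suitable computable function $g$. At a node $v$ all of whose child-subtrees have already been processed, I classify each child $c$ by its \emph{type}: the isomorphism type of $G[V(T_c) \cup P_v]$ with the $\le k$ vertices of $P_v$ individually labelled by depth. Since $|V(T_c)| \le g(k)$, the number of possible types is bounded by a function of $k$. The rule then deletes the whole subtree of a child $c$ whenever $v$ has more than $N(k)$ children of the type of $c$, where $N(k)$ is a constant chosen large enough for the soundness argument below; exhaustive application leaves $v$ with at most $N(k)$ children per type, hence $O(1)$ children overall, so $|V(T_v)| = O(1)$ as well. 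Turning this into a recurrence over the at most $k$ levels of $T$ fixes $g(k)$ and preserves the invariant. Once the rule no longer applies, $T$ — and hence $G$ — has only $O(1)$ vertices, so I can decide it by enumerating all of its linear orders and testing the nesting condition, and then recover a $1$-queue layout of the original $G$ (when one exists) by undoing the deletions in a suitable top-down order, which can be arranged to cost $O(n)$ in total. Together with \cref{pro:compute-td}, this gives the claimed running time.

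Everything rests on the soundness of the reduction rule, i.e.\ that $G$ admits a $1$-queue layout if and only if the reduced graph $G'$ does — this is \cref{lem:main}, and I expect it to be the main obstacle. The forward direction is immediate, as $G' \subseteq G$ and restricting a layout preserves the absence of nesting. For the converse I would start from a $1$-queue layout $\langle \prec, \sigma \rangle$ of $G'$ and argue that one deleted copy of a frequently-occurring type can be put back. All of the at least $N(k)$ surviving copies of that type attach only to the fixed $\le k$ vertices of $P_v$, which cut $\prec$ into at most $k+1$ intervals, so each copy has a bounded \emph{trace} recording how its $\le g(k)$ vertices distribute among these intervals and how they order themselves and relative to $P_v$. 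Pigeonholing on traces — and, one level deeper, on a bounded description of what lies in the region between consecutive copies — would isolate two copies that are ``consecutive'' with an unobstructed region between them, into which I would insert the new copy so as to mimic the placement of one of the two, thereby inheriting its (non-nesting) interaction with $P_v$ and with every other edge. The delicate point, and the technical heart of the proof, is to make this precise — in particular to rule out that an edge of the reinserted copy leading to $P_v$ nests with some pre-existing edge of $G'$; this is exactly what I would develop a \emph{block decomposition} of $1$-queue layouts to control, partitioning $\prec$ into blocks and tracking, for each block, which $P_v$-edges can reach across it.
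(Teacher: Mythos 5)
Your high-level framework is the right one and closely matches the paper: compute a treedepth decomposition, process it bottom-up, classify subtrees hanging off a node by type/isomorphism relative to the ancestor set $P_v$, delete a subtree whenever too many of a given type are present, iterate to a bounded-size kernel, and brute-force the kernel. You also correctly identify that the entire argument hinges on the soundness of the deletion rule (the paper's Lemma~\ref{lem:main}), that the forward direction is trivial, and that the hard direction is reinserting a deleted copy into an $h$-queue layout of the pruned graph. Up to here you are essentially reproducing the paper's plan, including the appeal to Proposition~\ref{pro:compute-td} and to Observation~\ref{obs:childparent} for the recurrence bounding $g(k)$.

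The soundness sketch, however, takes a turn that I don't think can be made to work, and it differs from the paper at the decisive step. You propose to pigeonhole your way to two ``consecutive'' equivalent copies with an ``unobstructed region'' between them and to drop the deleted copy $B$ into that gap, ``mimicking'' one of the neighbours. Two problems. First, such a gap need not exist: among many $\equiv_\prec$-equivalent copies, all of them may be pairwise interleaving (indeed that is the generic situation the paper exploits), so ``consecutive with an unobstructed region'' is the wrong configuration to look for, and the intended pigeonhole does not isolate it. Second, even if a gap existed, placing the whole of $B$ inside it breaks the edges from $B$ to $P_t$: those edges have spans different from \emph{every} surviving copy's edges to $P_t$, so you cannot ``inherit'' non-nesting from a neighbour. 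The paper instead locates two copies $C_x,C_y$ that are $vw$-\emph{interleaving} for every edge $vw$ (Lemmas~\ref{lem:dbound} and~\ref{lem:interleavetrans} plus a pigeonhole over edges), and then \emph{sandwiches} each vertex $v_B$ of $B$ strictly between its counterparts $v_x$ and $v_y$ (Observation~\ref{obs:relative}). That sandwiching is exactly what makes edges to $P_t$ safe: if $v_Bw$ (with $w\in P_t$) nested or were nested by some edge, then so would $v_xw$ or $v_yw$, contradicting the correctness of $\prec$. Your ``block decomposition'' is also a different object: you partition $\prec$ and track which $P_v$-edges cross each block, whereas the paper's blocks (Definition~\ref{def:block}) partition the vertices of the delimiting copy $C_x$ according to the interleaving pattern with $C_y$ and are used to decide the exact insertion positions within the sandwich, with the harder part of the case analysis devoted to edges \emph{internal} to $B$, not the $P_t$-edges. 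In short: the architecture is right, but the reinsertion mechanism you sketch (separated neighbouring copies + free gap) is the wrong invariant; you need interleaving delimiters and vertex-wise sandwiching, and your block notion needs to be redefined around that.
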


\subsection{Algorithm Description}
\begin{figure}[t]
\centering
\subfigure[]{\includegraphics[page=1]{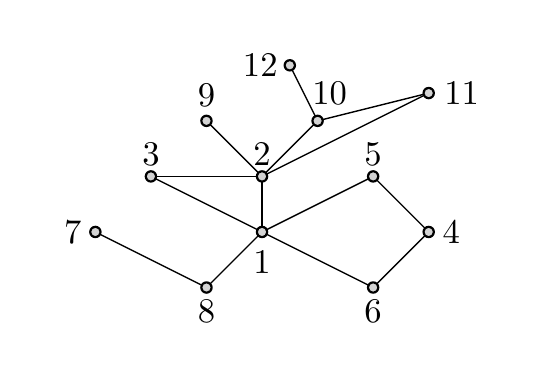}\label{fig:g}}\hfil
\subfigure[]{\includegraphics[page=2]{figs/treedepth}\label{fig:f}}
\caption{(a) A graph $G$ and (b) a treedepth decomposition ${\cal F}$ of $G$ of height $4$. In particular, $P_2=\{1,2\}$, $A_2=\{C_1,C_2,C_3\}$, and $m_2=3$.\label{fig:treedepth}}
\end{figure}

Since we assume $G$ to be connected, any treedepth decomposition of $G$ consists of a single tree $T$. 
Now, suppose that a treedepth decomposition $T$ of $G$ of depth $k$ is given. For a vertex $t$ of $T$, let $P_t$ be the set of ancestors of $t$ including $t$, let $A_t$ be the set of connected components of $G - P_t$ which contain a child of $t$, and $m_t$ be the maximum number of vertices in a component in $A_t$; see also \cref{fig:f}.

\begin{observation}
\label{obs:components}
For every component $C \in A_t$ and for every vertex $v \in C$, it holds that $N(v)\subseteq C \cup P_t$. Moreover, $|C \cup P_t| \le  m_t + k$. 
\end{observation}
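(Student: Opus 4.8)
The plan is to exploit the defining property of a treedepth decomposition, namely that $G \subseteq \clos(T)$, which forces every edge of $G$ to join a vertex to one of its ancestors (equivalently, descendants) in $T$. I would prove the two assertions separately.

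For the first assertion, fix a component $C \in A_t$ and a vertex $v \in C$, and let $u \in N(v)$. Since $G \subseteq \clos(T)$, the edge $uv$ is an edge of $\clos(T)$, so $u$ is an ancestor or a descendant of $v$ in $T$. If $u \in P_t$ there is nothing to show, so assume $u \notin P_t$. Then both $u$ and $v$ lie in $G - P_t$ and are adjacent there, hence they belong to the same connected component of $G - P_t$; as $v \in C$, that component is $C$, so $u \in C$. This proves $N(v) \subseteq C \cup P_t$.

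For the cardinality bound, note first that $C$ and $P_t$ are disjoint, since $C$ is a connected component of $G - P_t$; hence $|C \cup P_t| = |C| + |P_t|$. By the definition of $m_t$ we have $|C| \le m_t$. Moreover $P_t$ is precisely the set of vertices on the root-to-$t$ path in $T$, so $|P_t|$ equals the depth of $t$, which is at most the height $k$ of $T$. Combining these gives $|C \cup P_t| \le m_t + k$.

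I do not expect a genuine obstacle here: the only step requiring a moment's care is the observation that adjacency within $G - P_t$ keeps both endpoints in the same component, which is immediate, and the rest is bookkeeping about the decomposition. If desired, one could additionally record that, by the same ancestor--descendant argument, each $C \in A_t$ is actually contained in the subtree of $T$ rooted at a single child of $t$, but this strengthening is not needed for the statement.
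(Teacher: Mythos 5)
Your proof is correct, and since the paper states this as an Observation without proof, it is exactly the kind of elementary verification the authors intended. One small remark: the appeal to $G\subseteq\clos(T)$ (ancestor/descendant adjacency) in the first part is never actually used --- once you assume $u\notin P_t$, the fact that $u$ and $v$ are adjacent in $G-P_t$ already places them in the same component, so the first assertion holds purely from the definition of $A_t$ as a set of connected components of $G-P_t$, independent of any treedepth structure.
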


Now, we define the following equivalence over components in $A_t$. Components $B,C\in A_t$ satisfy $B\sim C$ if and only if there exists a bijective renaming function $\eta_{B,C}:B\rightarrow C$ over (the vertices of) $B,C$ such that each vertex $b_i\in B$ has a \emph{counterpart} $\eta_{B,C}(b_i) = c_i\in C$ that satisfies: (i) $N(b_i)\cap P_t=N(c_i)\cap P_t$ and (ii) $b_i$ is adjacent to $b_j\in B$ if and only if $c_i$ is adjacent to its counterpart $c_j$. When $B,C$ are clear from the context, we may drop the subscript of $\eta$ for~brevity.

By \cref{obs:components}, the number of equivalence classes of $\sim$ is upper-bounded by the number of possible graphs on $k+m_t$ vertices, which is at most $2^{(k+m_t)^2}$. 
The next observation allows us to propagate the bounds formalized by the notation above from children towards the root.

\begin{observation}
\label{obs:childparent}
If for a vertex $t$ of $T$ there exist integers $a,b$ such that each child $q$ of $t$ satisfies $|A_q|\leq a$ and $m_q\leq b$, then $m_t\leq (a\cdot b)+1$.
\end{observation}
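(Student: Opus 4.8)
The plan is to bound the size of an arbitrary component $C\in A_t$ by decomposing it according to which child of $t$ its vertices "hang off." First I would fix a component $C\in A_t$ and recall that, by definition, $C$ is a connected component of $G-P_t$ that contains at least one child of $t$. Let $q_1,\dots,q_r$ be the children of $t$ that lie in $C$. Since $P_t = P_q\setminus\{q\}$ for every child $q$ of $t$, removing $P_{q_i} = P_t\cup\{q_i\}$ from $G$ leaves exactly the components in $A_{q_i}$ plus possibly isolated pieces not containing a child of $q_i$; the key point is that every vertex of $C$ other than $q_1,\dots,q_r$ lies in $C\setminus\{q_1,\dots,q_r\} \subseteq G - (P_t\cup\{q_1,\dots,q_r\})$, and by \cref{obs:components} applied at each $q_i$, any such vertex has all its neighbours inside its own component of $A_{q_i}$ together with $P_{q_i}$. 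Hence $C$ decomposes as $\{q_1,\dots,q_r\}$ together with a union of components drawn from $\bigcup_{i=1}^r A_{q_i}$.

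The second step is the counting. Because $C$ is a single connected component of $G-P_t$, and the only vertices of $C$ adjacent (in $G-P_t$) to vertices outside $\{q_1,\dots,q_r\}\cup\bigcup A_{q_i}$-components are the $q_i$'s themselves, connectivity forces $C$ to contain at most one child of $t$: any two children $q_i,q_j$ of $t$ are non-adjacent in $G-P_t$ (they can only be joined through $t\in P_t$), so a component of $G-P_t$ contains at most one child of $t$. Thus $r=1$; write $q := q_1$. Then $C = \{q\} \cup (\text{some components of } A_q)$, so
\[
|C| \;\le\; 1 + |A_q|\cdot m_q \;\le\; 1 + a\cdot b.
\]
Taking the maximum over all $C\in A_t$ yields $m_t \le (a\cdot b)+1$, as claimed.

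I expect the main obstacle to be pinning down precisely that a component $C\in A_t$ contains exactly one child of $t$ and that its remaining vertices partition cleanly into components of $A_q$ for that unique child $q$ — i.e.\ verifying that no vertex of $C\setminus\{q\}$ can reach outside $P_q$-minus-$q$ except through $q$, which is exactly what \cref{obs:components} gives once one observes $P_q = P_t\cup\{q\}$. The rest is a one-line product bound. A minor point to state carefully is the base/edge case where $C$ consists of the single vertex $q$ with no further components, which still satisfies $|C| = 1 \le 1 + ab$.
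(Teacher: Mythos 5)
Your decomposition strategy—peel off the unique child $q$ of $t$ inside $C$ and account for the rest via $A_q$—is the right idea, and the final product bound $|C|\le 1 + |A_q|\cdot m_q \le 1+ab$ is exactly what one wants. The claim that each component of $A_t$ contains at most one child of $t$ is also correct, though your stated justification (non-adjacency of siblings) is not quite sufficient on its own: one needs that the entire subtree of $T$ rooted at $q_i$ is disconnected from the subtree rooted at $q_j$ in $G-P_t$, which follows because every $G$-edge in a treedepth decomposition goes between an ancestor–descendant pair, so any path out of the subtree of $q_i$ that avoids $P_t$ cannot leave that subtree.

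The real gap is the step where you assert that $C\setminus\{q\}$ "partitions cleanly into components of $A_q$," and claim this is "exactly what \cref{obs:components} gives." It is not. \cref{obs:components} does guarantee that $C\setminus\{q\}$ is a union of connected components of $G-P_q$, but it gives no reason why those components must contain a child of $q$, which is the defining membership condition for $A_q$. A component $D$ of $G-P_q$ sitting inside $C\setminus\{q\}$ can fail to contain a child of $q$—for instance if $q$'s $G$-neighbours inside $C$ are all deeper descendants rather than children in $T$—in which case $D\notin A_q$, it is not counted by $|A_q|$, and its size is not bounded by $m_q$. Concretely, take $T$ to be the path $r\!-\!t\!-\!q\!-\!c$ with additional leaves $d_1,\dots,d_k$ attached to $c$, and $G$ with edges $rt,rq,qc,qd_1,\dots,qd_k$: then $A_q=\{\{c\}\}$ so $|A_q|=m_q=1$, yet $m_t=k+2$, violating the stated bound. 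What rescues the observation (and your proof) is an additional property the paper uses implicitly: the treedepth decomposition is \emph{normalized} in the sense that for every node $v$ of $T$, the vertex set of the subtree rooted at $v$ is exactly a connected component of $G$ minus the strict ancestors of $v$ (one may always normalize an optimal decomposition this way without increasing its height). Under that property, $C$ is precisely the vertex set of the subtree rooted at $q$, and $C\setminus\{q\}$ is precisely $\bigcup A_q$, so your count goes through. The paper gives no proof of this observation, so there is nothing to compare against directly; but your write-up should make the normalization assumption explicit rather than attributing the crucial step to \cref{obs:components}, which does not deliver it.
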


The main component of our treedepth algorithm is \cref{lem:main}, stated below. 
Intuitively, applying \cref{lem:main} bottom-up on $T$ (together with \cref{obs:childparent}) allows us to iteratively remove subtrees from $T$  while preserving the (non-)existence of a hypothetical solution---in particular, we will be able to prune subtrees of parents with a very large number of children until we reach an equivalent instance where each vertex has a bounded number of children.
To formalize the meaning of ``very large'', we define the following function for $i\geq 2$: 
\[\#\emph{children}(k,i)=\big(\big((2^{(k+1)}+1)^{\emph{size}(k,i)^2}+1\big)\cdot (\emph{size}(k,i)+k)!\big)\cdot 2^{(k+\emph{size}(k,i))^2},\]

where $\emph{size}(k,i)$ is a recursively defined function that captures the size bound given by \cref{obs:childparent} as follows:
\begin{itemize}
\item $\emph{size}(k,i)=(\emph{size}(k,i-1)\cdot \#\emph{children}(k,i-1))+1$ for $i \ge 2$, and
\item $\emph{size}(k,1)=\#\emph{children}(k,1)=0$.
\end{itemize}

\begin{lemma}\label{lem:main}
Assume $G$ has a vertex $t$ at depth $i$ in $T$ such that $|A_t|\geq \emph{\#children}(k,i)$, but $m_t\leq \emph{size}(k,i)$ and every descendant $q$ of $t$ in $T$ satisfies that $|A_q|\leq \emph{\#children}(k,i-1)$. Then there exists a component $B$ of $A_t$ such that $G - B$ has queue number one if and only if $G$ has queue number one. Moreover, $B$ can be computed in time $\emph{size}(k,i)!\cdot \emph{\#children}(k,i)^2$.
\end{lemma}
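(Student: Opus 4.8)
The plan is to prove \cref{lem:main} by a pigeonhole-and-exchange argument: since $|A_t|$ is enormous compared to the number of ``types'' of components, many components of $A_t$ are equivalent under $\sim$, and among equivalent components many must be ``placed similarly'' in any hypothetical $1$-queue layout; we then show one of them is redundant and can be deleted without affecting solvability. Concretely, first I would fix the forward direction, which is trivial: if $G$ has queue number one, then so does $G - B$ for every $B$, since queue number is monotone under taking subgraphs. The whole content is the reverse direction, and here we argue contrapositively: assuming $G$ has queue number $\ge 2$, we must find a component $B \in A_t$ such that $G - B$ still has queue number $\ge 2$. Equivalently, we choose $B$ by a bounded-time combinatorial rule (the $\text{size}(k,i)! \cdot \#\text{children}(k,i)^2$ bound suggests iterating over orderings of a component's $\le \text{size}(k,i)$ vertices and over pairs of components), and we must show: if $G - B$ had a $1$-queue layout $\langle \prec, \sigma\rangle$, then $G$ would too — contradicting the assumption.

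The key steps for the reverse direction: (1) Group the components of $A_t$ into $\sim$-equivalence classes; by \cref{obs:components} and the counting remark after the definition of $\sim$, there are at most $2^{(k + \text{size}(k,i))^2}$ classes, so since $|A_t| \ge \#\text{children}(k,i)$, some class $\mathcal{C}$ contains at least $\big((2^{(k+1)}+1)^{\text{size}(k,i)^2}+1\big)\cdot(\text{size}(k,i)+k)!$ components — this is exactly the leading factor of $\#\text{children}(k,i)$. (2) Given a hypothetical $1$-queue layout $\lambda$ of $G - B$ where $B \in \mathcal{C}$ is the component we remove, we need a way to ``re-insert'' $B$ into $\lambda$. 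The natural idea is: among the remaining $\sim$-equivalent components of $\mathcal{C}$ in $G - B$, find one, say $C$, whose placement in $\lambda$ is ``compatible'' with inserting a copy of $B$ right next to it, using the renaming $\eta_{B,C}$ to carry over adjacencies to $P_t$ and within the component. Here we would use the new block-decomposition tool alluded to in the introduction: in a $1$-queue layout, the vertices of each component of $A_t$ together with their attachment points in $P_t$ occupy a structured set of ``blocks'' along $\prec$, and two $\sim$-equivalent components whose blocks are arranged the same way (same interleaving pattern with $P_t$ and same internal vertex order) can be placed immediately adjacent to each other so that a nested pair in one forces a nested pair in the other. (3) The counting: the number of possible ``block arrangement patterns'' of a component of size $\le s := \text{size}(k,i)$ relative to the $\le k$ vertices of $P_t$ is bounded by the number of linear orders of its $\le s$ vertices times the number of ways to interleave $k$ marked positions, i.e. $\le (s+k)!$; and the number of relevant ``nesting configurations'' a component can create with $P_t$ (which sub-pattern of edges to $P_t$ nest and how) is bounded by something like $(2^{k+1}+1)^{s^2}$ — matching the two remaining factors of the leading term. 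So by pigeonhole two components $B, C$ in $\mathcal{C}$ have the same pattern and the same nesting configuration; deleting $B$ (chosen, e.g., as the lexicographically first component admitting such a partner, which can be found in the stated time by brute-force over orderings and pairs) is safe, because from a $1$-queue layout of $G-B$ we reconstruct one of $G$ by cloning $C$'s placement for $B$.

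The main obstacle will be step (2)/(3): proving rigorously that ``same $\sim$-type plus same block/nesting pattern relative to $P_t$'' is genuinely sufficient to splice a removed component back in. The subtlety is that $B$'s edges interact not only with $P_t$ and with $B$ itself but, through the queue constraint, with edges of the rest of the graph that pass ``over'' $B$'s region; one must check that inserting $B$'s vertices in a tight interval mimicking $C$'s does not create a new nesting with some far-away edge of $G - B$. This is exactly where the block decomposition of $1$-queue layouts does the work: it must be set up so that the only edges that can nest with an edge incident to a component-vertex are edges incident to $P_t$ or internal to that component (a consequence of \cref{obs:components}: vertices of $C \in A_t$ have neighbors only in $C \cup P_t$), so that the local pattern really does capture everything. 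I would state and prove the block-decomposition lemma first as a standalone structural fact about $1$-queue layouts, then feed it into the pigeonhole argument above. The time bound and the claim that $B$ is computable in $\text{size}(k,i)! \cdot \#\text{children}(k,i)^2$ time then follow because deciding which component to delete only requires comparing $\sim$-types (a graph on $\le s+k$ vertices, hence the $(s+k)!$-style isomorphism check absorbed into the bound) over all $\le \#\text{children}(k,i)^2$ ordered pairs of components.
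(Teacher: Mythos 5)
Your high-level skeleton matches the paper's---pigeonhole over $\sim$-classes, a layout-dependent refinement with at most $(s+k)!$ classes, and re-inserting $B$ by imitating the placement of another equivalent component---but the core of your step (2) contains a gap that would make the construction fail.

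The idea of finding one partner component $C$ and inserting $B$ ``immediately adjacent'' to $C$ does not avoid nesting. Concretely, suppose $c_1\prec c_2$ with an edge $c_1c_2$, and you place $b_1$ right after $c_1$ and $b_2$ right after $c_2$, so $c_1\prec b_1\prec c_2\prec b_2$. Now take any edge $c_2x$ with $b_1\prec x\prec c_2$ (such an edge to $P_t$ or into $C$ is perfectly compatible with $\prec$ being a valid layout of $G-B$); then $c_2x$ is strictly inside the interval $(b_1,b_2)$ and independent of $b_1b_2$, i.e., $b_1b_2$ nests $c_2x$. Placing $b_i$ immediately to the left of $c_i$ has a symmetric problem. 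Also, the justification you offer---that by \cref{obs:components} ``the only edges that can nest with an edge incident to a component-vertex are edges incident to $P_t$ or internal to that component''---is false: \cref{obs:components} constrains which edges are \emph{incident} to a component vertex, not which edges can \emph{nest} with such an edge. Nesting is a positional relation and can involve completely disjoint parts of the graph.

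The paper's resolution, which your sketch does not reach, is to locate \emph{two} components $C_x,C_y$ in the same $\equiv_\prec$-class that are \emph{$vw$-interleaving for every edge $vw$}, and then to insert $B$ block-by-block strictly \emph{between} $C_x$ and $C_y$, so that every vertex $v_B$ satisfies $v_x\prec' v_B\prec' v_y$. The point is that any edge that would nest with a $B$-edge must then already nest with the corresponding $C_x$- or $C_y$-edge in $\prec$, which is a contradiction. Getting two such components is where the $(2^{k+1}+1)^{s^2}$ factor comes from, and it is not a count of ``nesting configurations'': it is a treedepth argument (no path of length $2^k$, \cref{pro:path-td}) bounding the number of pairwise $vw$-separate components for a fixed edge $vw$ by $2^{k+1}+1$, followed by a pigeonhole over the at most $s^2$ edges of a component. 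A related, smaller issue: you speak of $B$ and $C$ ``having the same pattern and nesting configuration,'' but $B$'s ``pattern'' is defined relative to a layout that by assumption does not contain $B$. The correct framing (which the paper uses) is that $B$ is chosen arbitrarily from the large $\sim$-class \emph{before} any layout is fixed, and then, for \emph{any} layout of $G-B$, the remaining components in the class are still numerous enough for the interleaving pigeonhole to go through.
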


The proof of the lemma is deferred to \cref{sub:proof_lemma1}. Before proceeding, we show how \cref{lem:main} is used to obtain \cref{thm:main}.

\begin{proof}[of \cref{thm:main}]
We start by applying \cref{pro:compute-td} to compute a treedepth decomposition $T$ of $G$ of depth at most $k$. Consider now  vertices at depth $k-1$ in $T$, i.e., vertices whose children are all leaves in $T$, and set $i=2$. Observe that every vertex $v$ at this depth satisfies $m_v\leq \emph{size}(k,2)$ since $\emph{size}(k,2)=1$ and $m_v=1$. If $|A_v|\geq \emph{\#children}(k,2)$, we apply \cref{lem:main} to obtain an equivalent graph with fewer vertices and restart on that graph. Otherwise, every vertex $v$ at depth $k-1$ satisfies $|A_v|<\emph{\#children}(k,2)$.

We now inductively repeat the above argument for every depth less than $k-1$. In particular, assume that for some depth $1\leq d\leq k-1$ 
every vertex $v$ at depth $d$ satisfies $|A_v|<\emph{\#children}(k,i)$, where $i=k-d+1$. Then we can set $d':=d-1$, $i':=i+1$, and recall from \cref{obs:childparent} that every vertex $v'$ at depth $d'$ satisfies $m_v\leq \emph{size}(k,i')$. Hence, if $v'$ has too many subtrees---in particular, if $|A_{v'}|\geq \emph{\#children}(k,i')$---we will once again apply \cref{lem:main} to obtain an equivalent smaller instance, in which case we restart the algorithm. Repeating this procedure for $d'$ will eventually stop, and at that point it will hold that $|A_{v'}|< \emph{\#children}(k,i')$ for every $v'$ at depth $d'$, in turn allowing us to continue with the induction.

The above procedure will halt only once the root $r$ of $T$ satisfies $|A_{r}|< \emph{\#children}(k,k)$ and $m_r\leq \emph{size}(k,k)$. At that point, we have a kernel $G'$~\cite{DBLP:series/txcs/DowneyF13,DBLP:books/sp/CyganFKLMPPS15}---an equivalent graph that has size bounded by a function of $k$, notably by $f(k)=\emph{\#children}(k,k)\cdot \emph{size}(k,k)+1$. To prove \cref{thm:main}, it suffices to decide whether $G'$ admits a $1$-queue layout by a brute-force algorithm that runs in time $\bigoh(f(k)!\cdot f(k)^2)$. Since \cref{lem:main} is applied $\bigoh(n)$ times and the runtime of the associated algorithm is $\bigoh(\emph{size}(k,k)\cdot \emph{\#children}(k,k)^2)$, the total runtime is upper-bounded by a function of $k$ times $n$.
Finally, we note that while it would be possible to provide a term upper-bounding the dependency on $k$ of the running time, it is clear that such a term must necessarily be non-elementary---indeed, the recursive definition of the two functions $\emph{\#children}(k,k)$ and $\emph{size}(k,k)$ results in a tower of exponents of height~$k$.
\qed\end{proof}

\subsection{Proof of \cref{lem:main}}\label{sub:proof_lemma1}
Since we have
$$|A_t|\geq \big(\big((2^{(k+1)}+1)^{\emph{size}(k,i)^2}+1\big)\cdot (\emph{size}(k,i)+k)!\big)\cdot 2^{(k+\emph{size}(k,i))^2}=\#\emph{children}(k,i)$$ 
\noindent and the number of equivalence classes of $\sim$ is upper-bounded by $2^{(k+m_t)^2}\leq 2^{(k+\emph{size}(k,i))^2}$, there must exist an equivalence class, denoted $A_t^\sim\subseteq A_t$, containing at least $\big((2^{(k+1)}+1)^{\emph{size}(k,i)^2}+1\big)\cdot (\emph{size}(k,i)+k)!$ connected components in $A_t$ which are pairwise equivalent w.r.t.\ $\sim$. Moreover, this equivalence class can be computed in time at most $\emph{size}(k,i)!\cdot \#\emph{children}(k,i)^2$ by simply brute-forcing over all potential renaming functions $\eta$ between arbitrarily chosen $\#\emph{children}(k,i)$-many components in $A_t$ to construct the set of all equivalence classes of these components.
Let $B$ be an arbitrarily selected component in $A_t^\sim$.
First, observe that if $G$ is a YES-instance then so is $G - B$, as deleting vertices and edges cannot increase the queue number. On the other hand, assume there is a $1$-queue layout of $G - B$ with linear order $\prec$. Our aim for the rest of the proof is to obtain a linear order $\prec'$ of $G$ that extends $\prec$ and yields a valid $1$-queue layout of $G$.

\paragraph{A Refined Equivalence.}
Let $\equiv_\prec$ be an equivalence over components in $A_t^\sim$ defined as follows. $C \equiv_\prec D$ if and only if the following holds:
the linear order $\prec$ restricted to $P_t\cup \eta_{C,D}(C)$ is the same as $\prec$ restricted to $P_t\cup C$.
In other words, $\equiv_\prec$ is a refinement of $\sim$ restricted to $A_t^\sim$ which groups components based on the order in which their vertices appear (also taking into account which subinterval they appear in w.r.t. $P_t$). Note that $\equiv_\prec$ has at most $(m_t+k)!\leq (\emph{size}(k,i)+k)!$ many equivalence classes, and hence by the virtue of $A_t^\sim$ having size at least $\big((2^{(k+1)}+1)^{\emph{size}(k,i)^2}+1\big)\cdot (\emph{size(k,i)}+k)!$, there must exist an equivalence class $U$ of $\equiv_\prec$ containing at least $(2^{(k+1)}+1)^{\emph{size}(k,i)^2}+1$ components of $A_t^\sim$.

We adopt the following terminology for $U$: we will denote the components in $U$ as $C_1,C_2,\dots,C_u$, where $u=|U|$, we will identify the vertices in a component $C_i$ by using the lower index $i$, and for each such vertex $v$, say $v = v_i\in C_i$, use $v_j$ to denote its counterpart $\eta_{C_i,C_j}(v_i)$.

\paragraph{Identifying Delimiting Components.}
Consider two adjacent vertices $v_i,w_i\in C_i$. We say that component $C_j$ is $vw$-\emph{separate} from $C_i$ if edges $v_iw_i$ and $v_jw_j$ neither nest nor cross each other. On the other hand, $C_j$ is $vw$-\emph{interleaving} (respectively, $vw$-\emph{nesting}) with $C_i$ if $v_iw_i$ and $v_jw_j$ cross each other (respectively, if one of $v_iw_i$ and $v_jw_j$ nests the other). By the definition of $\equiv_\prec$ and $U$, these three cases are exhaustive. Moreover, if $v_iw_i$ is an edge then so is $v_jw_j$ and hence $C_j$ cannot be $vw$-nesting with $C_i$. 
%
Our next aim will be to find two components --- we will call them \emph{delimiting components} --- that are not $vw$-separate for \emph{any} edge $vw$.
To this end, for some two adjacent vertices $v_i,w_i$ of $C_i$, denote by $D_1$ the component whose counterpart to $v_i$ (say $v_1$) is placed leftmost in $\prec$ among all components in $U$. We now define a sequence of components as follows: $D_\ell$ is the unique component that is (i) $vw$-separate from $D_{\ell-1}$ and whose vertex $v_\ell$ is placed (ii) to the right of $v_{\ell-1}$, and (iii) $v_\ell$ is placed leftmost among all components satisfying properties (i) and (ii). Let $d$ be the maximum integer such that $D_d$ exists. 

\newcommand{\dbound}{$d\leq 2^{k+1}+1$.}
\begin{lemma}[*] 
\label{lem:dbound}
\dbound
\end{lemma}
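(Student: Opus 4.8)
The plan is to bound the length of the sequence $D_1, D_2, \dots, D_d$ by showing that consecutive delimiting components must have "shifted" the relevant edge $v_iw_i$ strictly to the right in a way that can only happen a bounded number of times, where the bound is governed by the number of positions available in the linear order relative to $P_t$. Since all components in $U$ are equivalent under $\equiv_\prec$, their vertices occupy the "same pattern" of slots between consecutive vertices of $P_t$; what changes from one component to another is only which concrete interval (and which offset within the multiset of $U$-vertices) each copy of $v$ and $w$ lands in. The key structural fact is that being $vw$-separate and having $v_\ell$ to the right of $v_{\ell-1}$ forces the whole edge $v_\ell w_\ell$ to lie entirely to the right of (or disjointly after) $v_{\ell-1}w_{\ell-1}$; iterating, the edges $v_1w_1, v_2w_2, \dots$ march monotonically rightward without overlapping.

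**First I would** pin down the combinatorial "address" of an edge $v_iw_i$ within the layout $\prec$ restricted to $P_t \cup \bigcup U$: since $\td(G) \le k$ we have $|P_t| \le k$, so $P_t$ partitions the line into at most $k+1$ open intervals, and the endpoints $v_i, w_i$ each fall into one of these intervals. Because the components of $U$ are pairwise $\equiv_\prec$-equivalent, within a fixed interval the copies of $v$ from different components appear in some order, and similarly for $w$. I would argue that the "separate and strictly-right" condition forces the address of $v_\ell w_\ell$ to strictly increase in a linear order on these addresses — essentially, either $v_\ell$ jumps to a strictly later $P_t$-interval than $v_{\ell-1}$, or it stays in the same interval but the non-nesting/non-crossing constraint combined with "to the right" pushes $w_\ell$ to also move rightward past $w_{\ell-1}$. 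The number of distinct such addresses, accounting for which of the at most $k+1$ intervals contains $v$ and which contains $w$ (with $v \preceq w$), is at most $\binom{k+1}{2} + (k+1) = \binom{k+2}{2}$ crude positions, but the tighter argument the authors want — yielding $2^{k+1}+1$ — presumably comes from a cleaner encoding, e.g. recording for each of the $k+1$ gaps a single bit indicating whether the edge "has passed" that gap, giving $2^{k+1}$ states plus one for termination.

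**The hard part will be** making precise why $vw$-separateness plus the leftmost-choice rule (iii) actually forces strict monotone progress rather than allowing the sequence to stall or cycle, and in particular ruling out that two consecutive $D_\ell$'s could have $v$ in the same gap and $w$ in the same gap yet still be $vw$-separate — this is where the non-crossing, non-nesting definition of $vw$-separate does the real work, since within a single pair of gaps two edges that are separate must be genuinely disjoint along the line, and only boundedly many pairwise-disjoint edges fit given that their endpoints are squeezed by the $P_t$-vertices. I would handle this by a direct case analysis: fix the pattern of $\prec$ on $P_t \cup C_i$ (common to all of $U$), observe that once $v_\ell$ is forced right of $v_{\ell-1}$ and the edges are separate, the left endpoint $v_\ell$ must lie at or beyond $w_{\ell-1}$ unless a $P_t$-vertex intervenes, and then count how many times this "beyond the previous right endpoint" jump can occur before we exhaust the $\le k+1$ $P_t$-slots — each slot can host at most a bounded number of these maximal disjoint edge-starts, and summing geometrically over slots gives the $2^{k+1}$-type bound. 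The remaining bookkeeping (converting the slot-exhaustion count into exactly $2^{k+1}+1$) is routine arithmetic that I would not grind through here; the conceptual crux is the strict-monotonicity-of-addresses claim, which I expect to prove via the interplay of conditions (i)–(iii) in the definition of the $D_\ell$ sequence.
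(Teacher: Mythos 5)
Your proposal misses the key mechanism the bound relies on, and the alternative you sketch cannot work.

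First, the ``address'' argument collapses: because all components in $U$ lie in the same $\equiv_\prec$-equivalence class, the restriction of $\prec$ to $P_t \cup C_i$ is \emph{identical} across all $C_i \in U$ (up to the renaming $\eta$). In particular, every copy $v_\ell$ of $v$ sits in the \emph{same} $P_t$-interval, and likewise every copy $w_\ell$ of $w$. So there is no ``jump to a strictly later $P_t$-interval'' available: the addresses of $v_\ell w_\ell$ are constant along the whole sequence $D_1,\dots,D_d$, and the monotone-progress argument has nothing to count. The fallback claim you make — that only boundedly many pairwise-disjoint edges can fit inside a single $P_t$-gap ``given that their endpoints are squeezed by the $P_t$-vertices'' — is simply false; a single gap between two consecutive vertices of $P_t$ can host arbitrarily many disjoint edges, independent of $|P_t|$. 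This is exactly why your route cannot recover the $2^{k+1}+1$ bound.

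The ingredient you are missing is treedepth's restriction on \emph{path length} (\cref{pro:path-td}): a graph of treedepth $\le k$ has no path of length $2^k$. The paper's proof uses this as follows. By $\equiv_\prec$, every vertex of $P_t$ lies either left of $v_1$ or right of $w_d$; the $D_\ell$'s form a chain of pairwise disjoint edge-intervals in between. If $d > 2^{k+1}+1$, pick $D_\ell$ with at least $2^k$ separated components on each side. Since $D_\ell$ is a component of $G-P_t$ containing a child of $t$, some vertex of $D_\ell$ is adjacent to $P_t$, so there is a path from $v_\ell$ to $P_t$. That path must escape past $2^k$ disjoint intervals, and because nesting is forbidden in a $1$-queue layout, each interval it ``jumps over'' forces an intermediate vertex inside it; hence the path has length $\ge 2^k$, contradicting \cref{pro:path-td}. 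Your sketch never invokes connectivity to $P_t$, the no-nesting-forces-an-intermediate-vertex step, or the treedepth path bound — and without them the conclusion does not follow.
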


Moreover, each component $C_q$ in $U$ can be uniquely assigned to one component $D_\ell$ as defined above (w.r.t.\ the chosen edge $vw$) as follows: If $C_q=D_\ell$ for some $\ell$, then $C_q$ is assigned to itself; otherwise, $D_\ell$ is the component whose vertex $v_\ell$ is to the left of and simultaneously closest to the corresponding vertex $v_q$ in $C_q$ among all components $D_1,\dots,D_d$.

\newcommand{\interleavetrans}{Let $C_q$ and $C_p$ be two components assigned to the same component $D_\ell$ w.r.t.\ the edge $vw$. Then $C_q$ and $C_p$ are $vw$-interleaving.}
\begin{lemma}[*]
\label{lem:interleavetrans}
\interleavetrans
\end{lemma}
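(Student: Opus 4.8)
The plan is to argue by contradiction, supposing that $C_q$ and $C_p$ are both assigned to $D_\ell$ but are not $vw$-interleaving. Since $C_q,C_p,D_\ell$ all lie in the same equivalence class $U$ of $\equiv_\prec$, the edge $vw$ being present in $C_i$ forces the corresponding edges $v_qw_q$, $v_pw_p$, $v_\ell w_\ell$ to all be present; hence by the observation made just before the statement of Lemma~\ref{lem:dbound}, none of these three pairs of edges can nest, so each pair is either $vw$-separate or $vw$-interleaving (crossing). By assumption $C_q$ and $C_p$ are therefore $vw$-separate. I would also record what ``assigned to $D_\ell$'' means concretely: $v_\ell$ is the rightmost among $v_1,\dots,v_d$ that still lies to the left of $v_q$ (and likewise to the left of $v_p$), and moreover each of $C_q,C_p$ is $vw$-separate from every $D_{\ell'}$ it is assigned to the right of — this last point is essentially the defining property (i) of the $D$-sequence together with the maximality choice (iii).

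The key step is a WLOG reduction on the relative left-to-right position of $v_q$ and $v_p$: assume $v_\ell \prec v_q \preceq v_p$ (the two are distinct since they belong to different components, so $v_q \prec v_p$). Now I want to derive a contradiction with the minimality clause (iii) in the construction of $D_{\ell+1}$ (or with the fact that $d$ is maximal, if $\ell = d$). The idea is that $C_q$ is a candidate for the role of $D_{\ell+1}$: its vertex $v_q$ lies to the right of $v_\ell$ (property (ii) of the construction), so it only remains to check property (i), namely that $C_q$ is $vw$-separate from $D_\ell$. If that holds, then $D_{\ell+1}$ exists and $v_{\ell+1} \preceq v_q$, which — combined with the assignment of $C_q$ to $D_\ell$ rather than to $D_{\ell+1}$ or later — yields $v_q \prec v_{\ell+1}$, a contradiction; similarly $d \geq \ell+1$. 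So the crux is to show $C_q$ is $vw$-separate from $D_\ell$, given that $C_q$ is $vw$-separate from $C_p$ and (by the assignment and the chain property) from $D_\ell$ — wait, if $C_q$ is assigned to $D_\ell$ then by the defining property of the $D$-sequence it is indeed $vw$-separate from $D_\ell$. Then $C_q$ satisfies (i) and (ii), and the leftmost such component has its $v$-vertex at position $\preceq v_q$; by maximality this component is $D_{\ell+1}$, contradicting that nothing with a $v$-vertex in $(v_\ell, v_q]$ was chosen as $D_{\ell+1}$ before $C_q$'s assignment — i.e., $v_q$ should have been assigned to $D_{\ell+1}$, not $D_\ell$.

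The main obstacle I anticipate is pinning down exactly which component plays the role of $D_{\ell+1}$ and making the positional bookkeeping airtight: one has to be careful that ``$C_q$ is $vw$-separate from $D_\ell$'' genuinely follows from the assignment rule (it should, by property (i) of the $D$-sequence), and that the leftmost component satisfying (i)–(ii) cannot have its $v$-vertex strictly between $v_\ell$ and $v_q$ without forcing $C_q$ to be assigned to it instead of to $D_\ell$. A secondary subtlety is the edge case $\ell = d$, where instead of contradicting the minimality of $D_{\ell+1}$ one contradicts the maximality of $d$; this case is structurally the same but should be stated separately. Once the separateness of $C_q$ from $D_\ell$ is established, the contradiction with the assignment rule is immediate, completing the proof that $C_q$ and $C_p$ must in fact be $vw$-interleaving.
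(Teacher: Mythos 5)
There is a genuine gap, and in fact the central step you lean on is backwards. You assert that ``if $C_q$ is assigned to $D_\ell$ then by the defining property of the $D$-sequence it is indeed $vw$-separate from $D_\ell$.'' The assignment rule is purely positional: $D_\ell$ is the component among $D_1,\dots,D_d$ whose vertex $v_\ell$ is to the left of $v_q$ and closest to it. It says nothing about $C_q$ being $vw$-separate from $D_\ell$; the clause (i) you invoke governs the relation between consecutive $D$'s, not between $C_q$ and its assigned $D_\ell$. Worse, the correct consequence of the assignment is the \emph{opposite}: if $C_q\neq D_\ell$ and $C_q$ were $vw$-separate from $D_\ell$, then $C_q$ would witness that $D_{\ell+1}$ exists with $v_{\ell+1}\preceq v_q$, and then $v_\ell\prec v_{\ell+1}\preceq v_q$ forces $C_q$ to be assigned to $D_{\ell+1}$ (or be $D_{\ell+1}$ itself), contradicting its assignment to $D_\ell$. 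So assignment to $D_\ell$ means $C_q$ is \emph{not} $vw$-separate from $D_\ell$, hence (nesting being excluded) $vw$-interleaving with it. You actually run exactly this conditional argument, but treat it as completing the proof rather than as establishing ``$C_q$ interleaves with $D_\ell$.'' A tell-tale symptom is that your proof-by-contradiction never uses the hypothesis that $C_q$ and $C_p$ are $vw$-separate.

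What is missing is the final positional step, which is the whole content of the paper's short proof. Once you know both $C_q$ and $C_p$ interleave with $D_\ell$, and (w.l.o.g.) $v_\ell\prec w_\ell$ with $v_\ell\preceq v_q$ and $v_\ell\preceq v_p$, the interleaving forces $v_q\prec w_\ell\prec w_q$ and $v_p\prec w_\ell\prec w_p$. Thus both edges $v_qw_q$ and $v_pw_p$ straddle the single point $w_\ell$, so they cannot be disjoint intervals; and since they cannot nest either (both are edges of equivalent components in $U$), they must cross, i.e.\ $C_q$ and $C_p$ are $vw$-interleaving. Your write-up stops before this deduction, and the premise you'd need to get there is the one you state incorrectly.
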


We are now ready to construct our delimiting components. Recall that at this point, $|U|\geq (2^{(k+1)}+1)^{\emph{size}(k,i)^2}+1$ while the maximum number of edges inside a component in $U$ is upper-bounded by $m_t^2\leq \emph{size}(k,i)^2$. Hence by the pigeon-hole principle and by applying the bound provided in \cref{lem:dbound} for each edge inside the components of $U$, there must exist two components in $U$, say $C_x$ and $C_y$, which for each edge $vw$ are assigned to the same component $D^{vw}_\ell$. By \cref{lem:interleavetrans} it now follows that they are $vw$-interleaving for every edge $vw$.

\paragraph{Using Delimiting Components.}
Before we use $C_x$ and $C_y$ to insert $B$, we can show that the way they interleave with each other is ``consistent'' in $\prec$.  

\newcommand{\consistentorder}{Assume, w.l.o.g., that some vertex $v_x$ is to the left of $v_y$. Then for each vertex $w_x$ it holds that $w_x$ is  to the left of $w_y$.}
\begin{lemma}[*]\label{lem:consistent-order}
\consistentorder
\end{lemma}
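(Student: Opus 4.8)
The plan is to argue by contradiction: suppose $v_x \prec v_y$ but there is some vertex $w$ with $w_y \prec w_x$. The intended contradiction is that this configuration forces some edge or pair of vertices to witness a nesting in the (assumed valid) $1$-queue layout $\prec$, or else contradicts the fact that $C_x$ and $C_y$ are $vw$-interleaving for \emph{every} edge $vw$. First I would note that since $C_x \equiv_\prec C_y$, the linear orders of $P_t \cup C_x$ and $P_t \cup C_y$ are "the same" under the renaming $\eta_{C_x,C_y}$, and in particular the relative order of any two vertices within $C_x$ mirrors that of their counterparts within $C_y$. So the only freedom is in how the block $C_x$ is \emph{interleaved} with the block $C_y$ along $\prec$. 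The hypothesis $v_x \prec v_y$ together with $w_y \prec w_x$ says the interleaving pattern "flips" somewhere between the counterpart of $v$ and the counterpart of $w$.

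Next I would pick an edge of $C_i$ (equivalently, via $\eta$, an edge of $C_x$ and of $C_y$) and track how the flip in interleaving order interacts with it. Concretely, consider a path inside the component from a vertex realizing $v_x \prec v_y$ to a vertex realizing $w_y \prec w_x$; such a path exists because the components of $A_t$ are connected (Observation~\ref{obs:components}). Walking along this path, there must be a single edge $ab$ of the component at which the order of the $x$-copies relative to the $y$-copies switches: $a_x \prec a_y$ but $b_y \prec b_x$ (or the symmetric situation). I would then examine the four vertices $a_x, a_y, b_x, b_y$: because $C_x \equiv_\prec C_y$ forces $a_x, b_x$ to lie in the same relative order as $a_y, b_y$, and because the switch means the $x$-pair and the $y$-pair are "oppositely oriented" across each other, one of the edges $a_xb_x$, $a_yb_y$ must nest the other — contradicting that $C_x$ and $C_y$ are $ab$-interleaving (which we established just before the lemma, using Lemmas~\ref{lem:dbound} and~\ref{lem:interleavetrans}). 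If instead the switch does not immediately produce a nesting, it produces a crossing that is incompatible with the global interleaving pattern of the remaining edges, and chasing this around the (connected) component again yields a contradiction.

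The main obstacle I expect is the bookkeeping of \emph{which} of the several "switch" configurations actually arises, and ruling out the degenerate cases where $v$, $w$, $a$, $b$ are not all distinct or where an endpoint of the critical edge lies in $P_t$ rather than in the component — in the latter case one has to use clause (i) of $\sim$ (that $N(b_i)\cap P_t = N(b_j)\cap P_t$) to see that the $P_t$-endpoint is literally shared by both copies, so the two edges $a_xb$ and $a_yb$ emanate from a common point and the order of $a_x,a_y$ alone decides nesting versus crossing. Handling this shared-endpoint case carefully, and making sure the "walk along a path in the component" argument only ever needs to compare counterpart vertices whose relative order is pinned down by $\equiv_\prec$, is where the real care is needed; the rest is a short case analysis on the four endpoints of the critical edge.
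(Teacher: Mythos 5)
Your approach matches the paper's proof almost exactly: both take a $v_x$-$w_x$ path $P_x$ in $G[C_x]$, find the first edge $a_xb_x$ along it where the relative order of the $x$- and $y$-copies flips ($a_x \prec a_y$ but $b_y \prec b_x$), and observe that this forces $a_xb_x$ and $a_yb_y$ to nest, contradicting that $\prec$ is a $1$-queue layout of $G-B$. The paper's version is slightly tighter than yours in that because the path lies entirely inside $G[C_x]$ every edge endpoint is a component vertex (so the $P_t$-endpoint degeneracy you worry about never arises), and a two-case check on whether $a_x \prec b_x$ or $b_x \prec a_x$ shows the flip edge \emph{always} yields a genuine nesting -- your fallback about a crossing ``incompatible with the global interleaving pattern'' is never needed.
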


We remark that it is not the case that $C_x$ must be $vw$-interleaving with $C_y$ if $vw$ is not an edge -- this is, in fact, a major complication that we will need to overcome to complete the proof. 
W.l.o.g. and recalling \cref{lem:consistent-order}, we will hereinafter assume that every vertex $v_x\in C_x$ is placed to the left of its counterpart $v_y \in C_y$. 
The following definition allows us to partition the vertices of $C_x$ into subsequences that should not be interleaved with vertices of $B$. 

\begin{definition}[Block]\label{def:block}
A \emph{block} $L=\{v_x^1, v_x^2 , \dots, v_x^h\}$ of $C_x$ is a maximal set of vertices of $C_x$ such that: (1) there is no vertex $v_y^i$ (the counterpart in $C_y$ of $v_x^i$), with $1 \le i \le h$, between two vertices of $L$ in $\prec$; (2) there are no two vertices of $L$ such that one has a neighbor to its left and one has a neighbor to its right. 
\end{definition}

\noindent We observe that, as an immediate consequence of \cref{def:block}, no two vertices of $L$ are adjacent (an edge $uv$ in $L$ would imply that $u$ has a neighbor to its right and $v$ has a neighbor to its left, or vice versa).

For each block $L=\{v_x^1, v_x^2 , \dots, v_x^h\}$ of $C_x$, there is a corresponding set of vertices $\{v_B^1, v_B^2, \dots, v_B^h \}$ of $B$, i.e., the set containing the counterparts of $L$ in $B$. We will obtain a linear order of $G$ by processing the blocks of $C_x$ one by one as encountered in a left-to-right sweep of $\prec$, and for each block $L$, we will extend $\prec$ by suitably inserting  the corresponding vertices of $B$.

\begin{figure}[t]
\centering
\subfigure[]{\includegraphics{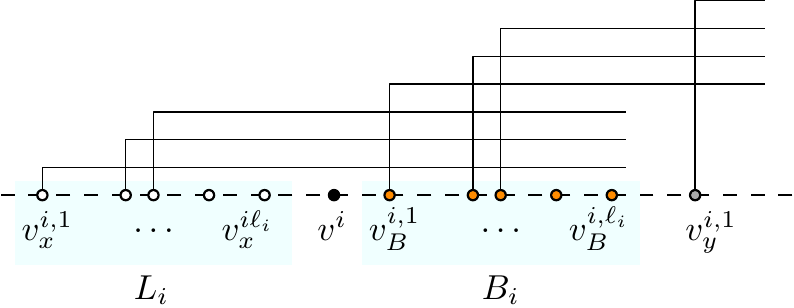}}
\subfigure[]{\includegraphics[width=\columnwidth]{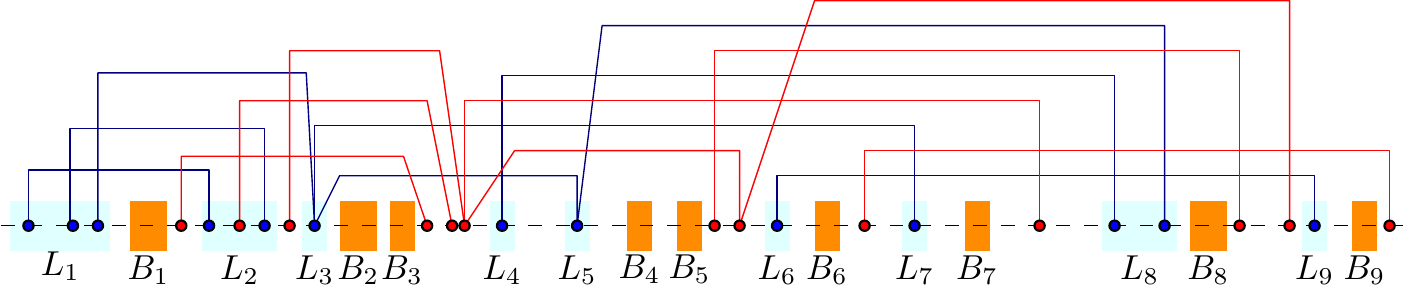}}
\caption{Reinsertion of $B_i$: (a) A schematic illustration, and (b) an example where blue and red vertices belong to $C_x$ and $C_y$, respectively.\label{fig:ordering}}
\end{figure}

Consider the $i$-th encountered block $L_i=\{v_x^{i,1}, v_x^{i,2} , \dots, v_x^{i,\ell_i}\}$ of $C_x$, refer to \cref{fig:ordering} for an illustration. Note that, because $C_x$ and $C_y$ are equivalent components, it holds $v_y^{i,1} \prec v_y^{i,2} \dots \prec v_y^{i,\ell_i}$ (even though such vertices might not be consecutive). Also, let $v^i$ be the first vertex to the left of $v_y^{i,1}$ in $\prec$ (possibly $v^i = v_x^{i,\ell_i}$). We insert all vertices in the corresponding block $B_i$ of $B$ such that: $v_i \prec v_B^{i,1} \prec v_B^{i,2} \prec \dots v_B^{i,\ell_i} \prec v^{i,1}_y$. After processing the last block of $C_x$, we know that all vertices of $C_x$ have been considered and hence all vertices of $B$ have been reinserted, that is, we extended $\prec$ to a linear order $\prec'$ of the whole graph $G$. The next observation immediately follows by the procedure described above.

\begin{observation}\label{obs:relative}
For every vertex $v_x$, it holds that $v_x \prec' v_B \prec' v_y$.
\end{observation}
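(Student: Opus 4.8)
The plan is to unwind the two definitions that immediately precede the statement — the block decomposition of $C_x$ (\cref{def:block}) and the insertion rule for the corresponding blocks of $B$ — and to verify the two inequalities $v_x\prec'v_B$ and $v_B\prec'v_y$ separately, for an arbitrary fixed $v_x\in C_x$. First I would locate $v_x$ in the block structure: it lies in exactly one block, say the $i$-th one encountered during the left-to-right sweep of $\prec$, so $v_x=v_x^{i,s}$ for some $1\le s\le\ell_i$, and its counterparts are $v_B=v_B^{i,s}\in B$ and $v_y=v_y^{i,s}\in C_y$. Everything then reduces to tracking where $v_B^{i,s}$ is placed by the insertion step relative to $v_x^{i,s}$ and $v_y^{i,s}$.

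For $v_B\prec'v_y$ I would argue directly from the insertion rule, which places all of $B_i$ immediately to the left of $v_y^{i,1}$, so $v_B^{i,s}\prec'v_y^{i,1}$; since $C_x\equiv_\prec C_y$, the order $\prec$ on $C_y$ mirrors the order on $C_x$ under $\eta$, hence $v_y^{i,1}\preceq v_y^{i,s}=v_y$, and the two combine. The one point to check here is that no later block $B_j$ ($j>i$) slips in between $v_B^{i,s}$ and $v_y^{i,1}$: this cannot happen because $v_B^{i,s}$ lies in the $\prec$-gap between the $\prec$-consecutive vertices $v^i$ and $v_y^{i,1}$, while $B_j$ is inserted into the $\prec$-gap between $v^j$ and $v_y^{j,1}$, and these two gaps are distinct — distinct blocks of $C_x$ have $\prec$-distinct first vertices, hence distinct $C_y$-counterparts $v_y^{i,1}\ne v_y^{j,1}$ and thus distinct immediate $\prec$-predecessors $v^i\ne v^j$ — so being distinct they are disjoint and $B_j$ falls entirely on one side of $v_y^{i,1}$.

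The crux is $v_x\prec'v_B$, for which I would isolate the claim that \emph{every} vertex of $L_i$ lies strictly to the left of $v_y^{i,1}$ in $\prec$. Here I would invoke the w.l.o.g.\ assumption coming from \cref{lem:consistent-order} that $v_x^{i,1}\prec v_y^{i,1}$, together with condition~(1) of \cref{def:block}: as $v_y^{i,1}$ is the $C_y$-counterpart of the block vertex $v_x^{i,1}$, it cannot lie strictly between two vertices of $L_i$, in particular not in the open interval spanned by $v_x^{i,1}$ and $v_x^{i,\ell_i}$; combined with $v_x^{i,1}\prec v_y^{i,1}$ and $v_y^{i,1}\notin C_x$ this forces $v_x^{i,\ell_i}\prec v_y^{i,1}$, so $v_x=v_x^{i,s}\preceq v_x^{i,\ell_i}\prec v_y^{i,1}$. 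Since $v^i$ is by definition the vertex immediately preceding $v_y^{i,1}$ in $\prec$, we get $v_x^{i,s}\preceq v^i$, and since the insertion rule places $B_i$ strictly to the right of $v^i$ we get $v^i\prec'v_B^{i,s}$; chaining yields $v_x\preceq v^i\prec'v_B$, and together with the previous paragraph $v_x\prec'v_B\prec'v_y$. The only step I expect to need more than routine bookkeeping about how $\prec'$ extends $\prec$ is this use of block-condition~(1) — the point that the whole block $L_i$ sits to the left of $v_y^{i,1}$, which is precisely what makes the insertion window for $B_i$ land to the right of $L_i$ but to the left of the corresponding window in $C_y$.
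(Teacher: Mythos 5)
Your proof is correct and fills in, carefully, the bookkeeping that the paper dismisses with ``immediately follows by the procedure described above'': the paper's parenthetical ``(possibly $v^i = v_x^{i,\ell_i}$)'' is exactly the hint that all of $L_i$ sits to the left of $v_y^{i,1}$, which is the crux you isolate via block condition (1) and the w.l.o.g.\ from \cref{lem:consistent-order}. Your worry about later blocks sliding into the gap is not actually needed for the stated inequalities (inserting more vertices never reverses an established order, and the rule already places $B_i$ strictly inside $(v^i,v_y^{i,1})$), but it is a correct and harmless extra check, so overall this matches the intended argument.
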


We now establish the correctness of $\prec'$, completing the proof of Lemma~\ref{lem:main}.

\newcommand{\lemprec}{The linear order $\prec'$ yields a valid $1$-queue layout of $G$.}
\begin{lemma}[*]\label{lem:prec}
\lemprec
\end{lemma}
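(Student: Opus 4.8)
The plan is to verify that $\prec'$ induces no nesting pair of independent edges, arguing by cases on where the two edges' endpoints fall. Recall $G = (G - B) \cup B$ together with the edges incident to $B$, and by \cref{obs:components} every edge of $B$ (or incident to $B$) has both endpoints in $B \cup P_t$, where $P_t$ lies entirely outside all components of $A_t^\sim$. So the only edges whose placement in $\prec'$ is genuinely new are: (a) edges internal to $B$, and (b) edges from $B$ to $P_t$. An edge $e$ of $G - B$ cannot nest with another edge of $G - B$ since $\prec$ was already a valid $1$-queue layout and $\prec'$ extends $\prec$ (the relative order of all old vertices is unchanged). Hence it suffices to rule out nesting where at least one of the two independent edges is of type (a) or (b).

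First I would record the key structural facts about the insertion: by \cref{obs:relative}, each $v_B$ lies strictly between $v_x$ and $v_y$; and by construction the vertices of each block $B_i$ are inserted in a single contiguous run immediately to the left of $v_y^{i,1}$, in the same relative order as the corresponding block $L_i$ of $C_x$. The crucial point is that $C_x$ and $C_y$ are $vw$-interleaving for every edge $vw$, and by \cref{lem:consistent-order} this interleaving is "consistent": $v_x \prec w_x \iff v_y \prec w_y$ for all pairs. Combined with the equivalence $B \sim C_x \sim C_y$ (so $B$'s internal adjacency and $B$-to-$P_t$ adjacency mirror those of $C_x$ and $C_y$), the idea is that $B$'s edges are sandwiched between the "parallel" copies in $C_x$ and $C_y$, and any edge that would nest with a $B$-edge would have to nest with the corresponding $C_x$-edge or $C_y$-edge — contradicting validity of $\prec$. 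I would handle type-(a) edges $v_B w_B$ first: using \cref{obs:relative} the interval $[v_B, w_B]$ (in $\prec'$) is contained in the union of the intervals $[v_x, v_y]$ and $[w_x, w_y]$, and the block machinery (\cref{def:block} condition (2), ensuring a block's vertices all "point the same way", plus condition (1) keeping block interiors free of $C_y$-counterparts) is exactly what prevents an old edge from slipping into a nesting position; here I would lean on \cref{lem:interleavetrans} and the fact that within a block no two vertices are adjacent. Then type-(b) edges $v_B p$ with $p \in P_t$: since $N(v_B) \cap P_t = N(v_x) \cap P_t = N(v_y) \cap P_t$ by $\sim$, the edge $v_B p$ lies "between" $v_x p$ and $v_y p$ in the layout, so any edge nesting with $v_B p$ nests with one of the latter two, again a contradiction. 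Finally, nesting between two new edges (two type-(a), two type-(b), or one of each) is ruled out by the same sandwiching argument together with the fact that $B$ itself is an isomorphic copy of $C_x$ whose internal layout (induced by $\prec'$ on $B$) matches that of $C_x$ under $\prec$.

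The main obstacle I anticipate is precisely the complication the authors flag: $C_x$ need \emph{not} be $vw$-interleaving with $C_y$ when $vw$ is a \emph{non}-edge. This means I cannot simply say "$v_B$ is always between $v_x$ and $v_y$ in a way that mirrors $C_x$'s layout on all of $C_x$"; the relative positions of \emph{non}-adjacent block representatives and the $C_y$-counterparts can be scrambled, which is why the insertion is done block-by-block with the delicate choice of insertion point $v^i$ (the first vertex left of $v_y^{i,1}$) rather than naively placing $B_i$ right after $L_i$. So the heart of the proof is a careful case analysis showing that even though the global picture is messy, every \emph{edge} of $B$ ends up in an interval that is "protected" by the corresponding edges of $C_x$ and $C_y$ — i.e. each edge of $B$ crosses (or is separate from) every other edge it shares no endpoint with, because the analogous statement holds for $C_x$/$C_y$ in $\prec$. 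I would organize this as: (1) two edges both inside $B$; (2) one edge inside $B$, one in $G-B$; (3) one edge inside $B$, one incident to $B$ (type (b)); (4) both incident to $B$; (5) one incident to $B$, one in $G - B$ — with the block-boundary argument (no $C_y$-counterpart inside a block, uniform neighbor-direction within a block) doing the real work in cases (2) and (5), and the $\sim$-equivalence of $P_t$-neighborhoods doing the work whenever $P_t$ is involved.
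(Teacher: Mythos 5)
Your overall organization mirrors the paper's: split by edge type, use \cref{obs:relative} plus the $\sim$-equivalence to transfer a nesting involving a $B$-to-$P_t$ edge onto the corresponding $C_x$- or $C_y$-edge, and use the fact that $\prec'$ restricted to $B$ is an order-isomorphic copy of $\prec$ restricted to $C_x$ to dispose of pairs of edges both inside $B$. Those two cases are correctly handled and match the paper.

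The genuine gap is in the case you yourself flag as ``the heart of the proof'': an edge $v_Bw_B$ with both endpoints in $B$ against an edge $ab$ of $G-B$. You list the ingredients (\cref{def:block}, no two adjacent vertices in a block, \cref{lem:interleavetrans}, connectivity of $C_x$) but then assert that ``the block-boundary argument does the real work'' without actually running it. This is not a detail to be filled in: the paper devotes its main-text sketch plus an appendix to this case, and the argument is a non-obvious chain --- $v_x$ and $w_x$ must sit in distinct blocks $L_i,L_j$ because blocks contain no adjacent pair; $a$ must be pinned to $v_y^{i,1}\preceq' a$ because the only vertices strictly between $v_B^{i,i'}$ and $v_y^{i,1}$ are new; connectivity of $C_x$ forces $v_x^{i,1}$ to have a neighbour, and condition~(2) of \cref{def:block} forces that neighbour to lie to its right; no-nesting in $\prec$ then pushes that neighbour and its $C_y$-counterpart to the left of $b$; and only then does one derive $l=j$ and the contradiction $v_B^{j,j'}\prec' v_y^{j,j^*}\prec' b$. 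A symmetric but separate chain is needed when $v_Bw_B$ is \emph{nested by} $ab$. None of this is recoverable from ``lean on \cref{lem:interleavetrans} and the block-boundary argument''.

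There is also a concrete false step that suggests the geometry has not been pinned down: you claim the interval $[v_B,w_B]$ in $\prec'$ is contained in $[v_x,v_y]\cup[w_x,w_y]$. Since $v_xw_x$ and $v_yw_y$ are $vw$-interleaving and (by \cref{lem:consistent-order}) consistently ordered, one actually has $v_x\prec v_y\prec w_x\prec w_y$, so $[v_B,w_B]$ necessarily contains the whole gap $(v_y,w_x)$ between the two intervals. Edges $ab$ living in or straddling that gap are precisely what the careful case analysis has to contend with; the ``sandwiching'' picture you invoke does not protect against them on its own. Finally, a small misattribution: the fact $v_x\prec w_x \iff v_y\prec w_y$ comes from the definition of $\equiv_\prec$, not from \cref{lem:consistent-order} (which compares $v_x$ with $v_y$, not $v_x$ with $w_x$).
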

\begin{proof}[sketch]
\begin{figure}[p]
\centering
\subfigure[]{\includegraphics[page=1]{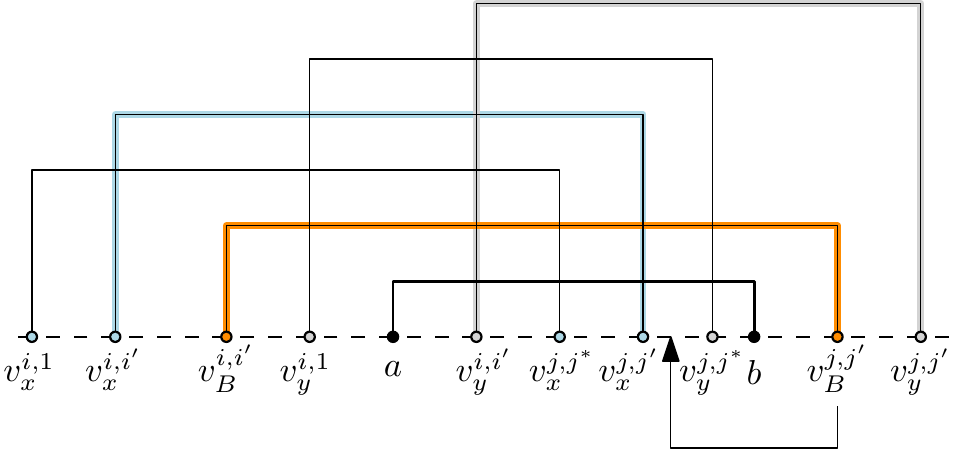}\label{fig:prec-a}}
\subfigure[]{\includegraphics[page=2]{figs/prec}\label{fig:prec-b}}
\subfigure[]{\includegraphics[page=3]{figs/prec}\label{fig:prec-c}}
\caption{Illustration for the proof of \cref{lem:prec}: $v_B^{i,i'}v_B^{j,j'}$ nests an edge $ab$.}
\end{figure}
To prove the statement, we argue that no two edges of $G$ nest in the $1$-queue layout defined by $\prec'$. We recall that $\prec'$ extends $\prec$, hence we do not need to argue about pairs of edges in $G - B$.
Moreover, by construction, $\prec'$ restricted to $C_x$ is the same as $\prec'$ restricted to $B$ (up to the renaming function $\eta$). Consequently, no two edges having both endpoints in $B$ can nest. 

We first consider any edge $v_Bw$ for $w \in P_t$ and $v_B\in B$, and assume $v_B \prec' w$ (else the argument is symmetric). Suppose, for a contradiction, that $v_Bw$ nests another edge $ab$. Recall that since $C_x$ and $B$ are equivalent components, if $v_B$ is to the left of $w$, the same holds for $v_x$. By \cref{obs:relative}, we know $v_x \prec' v_B \prec' w$, which implies that $ab$ is nested by $v_xw$ as well, a contradiction with  the correctness of $\prec$.
Similarly, if $v_Bw$ is nested by an edge $ab$, then we know $v_B \prec' v_y \prec' w$, which implies that $ab$ nests $v_yw$ as well, again a contradiction.

We now consider any edge $v_Bw_B$, with $v_B \prec' w_B$, and we assume for a contradiction that $v_Bw_B$ nests an edge $ab$. Since Definition~\ref{def:block} ensures that a block cannot contain a pair of adjacent vertices, we know that $v_x$ and $w_x$ belong to different blocks, say $L_i$ and $L_j$ (with $i < j$) respectively. Therefore, we can rename the vertices as $v_x = v_x^{i,i'}$ and $w_x = v_x^{j,j'}$, and similarly $v_B = v_B^{i,i'}$ and $w_B = v_B^{j,j'}$; refer to \cref{fig:prec-a} for an illustration. By \cref{obs:relative}, it holds $v_x^{i,i'} \prec' v_B^{i,i'} \prec' v_y^{i,i'}$ and $v_x^{j,j'} \prec' v_B^{j,j'} \prec' v_y^{j,j'}$. Moreover, the correctness of $\prec$ implies that $v_B^{i,i'} \prec' a \prec' v_y^{i,i'}$ (since $v_y^{i,i'}v_y^{j,j'}$ cannot nest $ab$) 
and $v_x^{j,j'}\prec' b \prec' v_B^{j,j'}$ (since $v_x^{i,i'}v_x^{j,j'}$ cannot nest $ab$). 
Because $a$ is between $v_B^{i,i'}$ and $v_y^{i,i'}$, either there exists another vertex $v_y^{i,1}$ (the counterpart to the first vertex in block $L_i$, where possibly $v_y^{i,1}=a$)  such that $v_B^{i,i'} \prec' v_y^{i,1} \preceq' a \prec' v_y^{i,i'}$, or $a = v_y^{i,i'}$.

Suppose first $a \neq v_y^{i,1}$ and $a \neq v_y^{i,i'}$. Observe that $v_x^{i,1}$ has at least one neighbor in $C_x$ (because $C_x$ is connected), and that $v_x^{j,j'}$ is to the right of $v_x^{i,i'}$, hence, by \cref{def:block}, $v_x^{i,1}$ also has a neighbor to its right, say $v_x^{l,j^*}$. Because no two edges nest in $\prec$, it must be: (i) $v_x^{i,1} \prec' v_x^{i,i'}$, (ii) $v_x^{l,j^*} \prec' b$, and (iii) $v_y^{l,j^*} \prec' b$ (possibly $v_y^{l,j^*} = b$). 
 Altogether, this implies that $v_x^{j,j'}$ and $v_x^{l,j^*}$ are in the same block (i.e., $l=j$) and hence $v_B^{j,j'} \prec' v_y^{j,j^*} \prec' b$, which contradicts $b \prec' v_B^{j,j'}$. If instead  $a = v_y^{i,1}$ or $a = v_y^{i,i'}$, then $b$ is either a vertex of $C_y$ or a vertex of $P_t$. If $b \in C_y$, the argument is similar, as we can set $b=v_y^{j,j^*}$ and observe that $v_B^{j,j'}$ should be to the left of $v_y^{j,j^*}$, see \cref{fig:prec-b}. If $b \in P_t$, we would have $v_x^{j,j'} \prec' b \prec' v_y^{j,j'}$, which contradicts the fact that $C_x$ and $C_y$ are equivalent components, see \cref{fig:prec-c}. 
\qed\end{proof}

\section{Parameterization by Vertex Cover Number}\label{sec:vc}

We now turn to the general \qnp problem and show that it is fixed-parameter tractable when parameterized by the vertex cover number by proving:

\begin{theorem}\label{th:vc-main}
Let $G$ be a graph with $n$ vertices and vertex cover number $\tau=\tau(G)$. A queue layout of $G$ with the minimum number of queues can be computed in $\bigoh(2^{{\tau^{\bigoh(\tau)}}}+\tau \log \tau \cdot n)$ time.
\end{theorem}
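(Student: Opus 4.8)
The plan is to follow the kernelization strategy: produce an equivalent instance whose size is bounded by a function of $\tau$ alone, then solve it by brute force. Fix a minimum vertex cover $C=\{c_1,\dots,c_\tau\}$ computed via \cref{pro:compute-vcn} in time $\bigoh(2^\tau+\tau\cdot n)$; the independent set $I=V\setminus C$ contains all but $\tau$ vertices. Every vertex $v\in I$ is determined, up to its queue behavior, by its neighborhood $N(v)\subseteq C$, so the vertices of $I$ fall into at most $2^\tau$ \emph{types}. The key structural observation is that in any $h$-queue layout, once the linear order $\prec$ is restricted to $C$ it splits the line into $\tau+1$ slots, and a vertex $v\in I$ of a given type, when placed in a given slot, creates a fixed pattern of potential nestings with the edges incident to $C$; moreover two vertices of the same type placed in the same slot in the same relative order behave interchangeably with respect to all other vertices. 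The plan is therefore to show that if some type has ``too many'' vertices in some slot (more than a threshold $g(\tau)$), then one of them can be deleted without changing the answer, because any $h$-queue layout of the smaller graph can be extended by reinserting the deleted vertex immediately next to a surviving twin of the same type in the same slot, inheriting its queue assignment.

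The main steps, in order, are as follows. First I would bound the number of queues we ever need to consider: since a graph with vertex cover number $\tau$ has at most $\tau$ edges within $C$ and the edges from $I$ to $C$ can be partitioned cleverly, one shows $\qn(G)\le \tau+1$ or some similar $\bigoh(\tau)$ bound (e.g.\ using the treewidth bound $\qn\le\bigoh(2^{\mathrm{tw}})$ together with $\mathrm{tw}\le\tau$, or a direct argument), so it suffices to decide $h$-\qnp\ for $h$ ranging over $[\bigoh(\tau)]$. Second, I would formalize the notion of type and slot and the associated equivalence on $I$, noting there are at most $2^\tau$ types and $\tau+1$ slots, hence at most $(\tau+1)2^\tau$ classes. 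Third, I would prove the \emph{redundancy lemma}: there is a threshold $g(\tau)=\tau^{\bigoh(\tau)}$ such that if a class has more than $g(\tau)$ vertices, deleting one of them yields an equivalent instance; the forward direction is trivial (deletion cannot increase the queue number), and the backward direction is the reinsertion argument sketched above, where the surviving twin acts as a template. Fourth, exhaustive application of the redundancy lemma yields a kernel $G'$ with $|V(G')|\le \tau+(\tau+1)2^\tau g(\tau)=\tau^{\bigoh(\tau)}$; each application removes a vertex and can be located in time $\bigoh(n)$ (or $\bigoh(\tau\log\tau\cdot n)$ overall if we bucket vertices by type using radix/comparison sort on their neighborhood bitmasks, which is where the $\tau\log\tau\cdot n$ term comes from). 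Finally, I would decide the minimum $h$ for $G'$ by brute force over all linear orders of $V(G')$ and all edge-to-queue assignments, which costs $2^{|V(G')|^{\bigoh(1)}}=2^{\tau^{\bigoh(\tau)}}$.

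The delicate part will be the backward direction of the redundancy lemma, exactly as in the treedepth case: I must argue that reinserting the deleted vertex $v$ next to a same-type, same-slot twin $v'$, and copying $v'$'s queue assignment onto the corresponding edges of $v$, cannot create a nesting. The subtlety is that $v$ and $v'$ have the same neighbors in $C$ but the edges of $v$ and the edges of $v'$ go to different "copies" only through $C$, so placing $v$ immediately adjacent to $v'$ (say just left of $v'$) ensures every edge $vc$ runs essentially parallel to $v'c$, and any edge that would nest with $vc$ would already nest with $v'c$ or with some edge between $v'$ and $c$; one has to handle carefully the few vertices lying strictly between $v$ and $v'$ — but if $v'$ is chosen to be the twin closest to the chosen insertion point, no vertex of $I$ of the same type lies between them, and vertices of $C$ between them are impossible since they share the slot. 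I expect the bookkeeping needed to show no new nesting arises among the $\bigoh(\tau)$ queues simultaneously — and to pin down the precise value of $g(\tau)$ forced by how many distinct "interleaving patterns" with the at most $\tau^2$ cover-edges a block of twins can realize — to be the technical heart of the argument, analogous to \cref{lem:dbound} and \cref{lem:interleavetrans} in \cref{sub:proof_lemma1}.
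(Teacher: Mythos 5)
Your high-level plan matches the paper's: compute a vertex cover, classify $I = V\setminus C$ into at most $2^\tau$ types by $N(v)\subseteq C$, show that oversized types contain a deletable vertex, iterate to a kernel of size $\tau^{\bigoh(\tau)}$, and finish by brute force plus a binary search on $h\le\tau$. The paper even proves the $\qn(G)\le\tau$ bound directly (\cref{lem:vc}) rather than invoking the treewidth bound, but that is cosmetic.

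There is, however, a genuine gap in the backward direction of your redundancy lemma, and you correctly flag it as the ``technical heart.'' Your sketch claims that if $v$ is inserted immediately left of a twin $v'$ and inherits $v'$'s queue assignment, then ``any edge that would nest with $vc$ would already nest with $v'c$.'' This is false: take $b\in C$ with $v'\prec b\prec c$ and suppose $v'b$ and $v'c$ lie in the \emph{same} queue (edges sharing an endpoint may do so). Then $v\prec v'\prec b\prec c$ are four distinct vertices, so $vc$ nests $v'b$, even though $v'c$ nests nothing (it shares $v'$ with $v'b$ and shares $c$ with $vc$). Ruling this out requires more than ``same type, same slot.'' The paper's fix is to pass to a layout-dependent refinement of type, namely \emph{queue equivalence}: $u_1,u_2\in V_U$ are queue-equivalent if for every $w\in U$ the edges $u_1w$ and $u_2w$ lie in the same queue, giving at most $h^{|U|}\le h^\tau$ classes. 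Taking three queue-equivalent twins $u_1,u_2,u_3$, the edges of each single queue to these three vertices induce a $K_{l,3}$, and since $K_{3,3}$ has queue number $>1$ (it is non-planar) one gets $l\le 2$. A short case analysis then forces $w\prec u_1\prec u_2\prec u_3\prec z$ for the at most two active cover vertices $w,z$ on that queue, and \emph{this} is what makes the insertion of $v$ immediately after $u_1$ safe on every queue simultaneously. Your $g(\tau)$ should thus be (roughly) $2h^\tau+2$, coming from the number of queue-equivalence classes, not from ``interleaving patterns with cover-edges.''

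A smaller issue: the ``slot'' notion you introduce is layout-dependent, so it cannot be used to decide \emph{which} vertex to delete during kernelization (you have no layout of $G$ yet). The paper's deletion criterion is purely the graph-level type count $|V_U|$; the refined, layout-dependent equivalence (queue equivalence, not slots) enters only inside the proof that a layout of $G'$ extends to one of $G$. If you restate your redundancy lemma with the threshold on $|V_U|$ alone and replace the slot argument by the queue-equivalence plus $K_{3,3}$ argument, the proposal becomes a correct reconstruction of the paper's proof.
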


\subsection{Algorithm Description}

Before describing the algorithm behind \cref{th:vc-main}, we make an easy observation (which matches an analogous observation in~\cite{JGAA-526}).

\begin{lemma}\label{lem:vc}
Every $n$-vertex graph $G=(V,E)$ with a vertex cover $C$ of size $\tau$ admits a $\tau$-queue layout. 
Moreover, if $G$ and $C$ are given as input, such a $\tau$-queue layout can be computed in $\bigoh(n+\tau\cdot n)$ time.
\end{lemma}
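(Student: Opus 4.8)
The plan is to place all vertices of the vertex cover $C$ first in an arbitrary linear order, followed by all vertices of the independent set $V \setminus C$ in an arbitrary linear order, and then to assign edges to queues in a way that provably avoids nesting. Write $C = \{c_1 \prec c_2 \prec \dots \prec c_\tau\}$ and let the remaining vertices $V \setminus C$ come after $c_\tau$ in some fixed order. Since $C$ is a vertex cover, every edge of $G$ has at least one endpoint in $C$, so the edges split into two types: edges with both endpoints in $C$, and edges with exactly one endpoint in $C$ and the other in the independent set $V \setminus C$.

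For the edges internal to $C$, there are at most $\binom{\tau}{2}$ of them, and we can afford to place each such edge in its own queue; alternatively, since an edge $c_ic_j$ nests $c_pc_q$ only when $i < p < q < j$, one can be slightly more careful, but the crude bound of $\binom{\tau}{2}$ queues already suffices for a $\tau^{\bigoh(\tau)}$-type argument (and in fact one can do better; but for the stated bound of $\tau$ queues we must be more economical, so I would instead merge these into the scheme below). For the edges with exactly one endpoint in $C$, I would group them by their endpoint in $C$: for each $i \in [\tau]$, let queue $Q_i$ contain all edges of the form $c_i v$ with $v \in V \setminus C$. The key observation is that two edges $c_i v$ and $c_i v'$ in the same $Q_i$ share the endpoint $c_i$, hence are not independent, hence cannot nest by definition — so each $Q_i$ is a valid queue. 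This uses exactly $\tau$ queues and covers all edges with an endpoint in the independent set. To also absorb the internal edges $c_ic_j$ into these $\tau$ queues without blowing up the count, I would assign $c_ic_j$ (with $i<j$) to queue $Q_i$; then any two edges in $Q_i$ share the vertex $c_i$, so again no two are independent, and $Q_i$ remains nesting-free. Thus all of $E$ is partitioned into $Q_1, \dots, Q_\tau$, giving a $\tau$-queue layout.

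The main (very mild) obstacle is simply verifying that assigning the internal edges to $Q_i$ keyed on the \emph{left} cover-endpoint genuinely keeps every queue conflict-free: one must check that an edge $c_iv$ with $v$ in the independent set and an edge $c_ic_j$ both lie in $Q_i$ only because they share $c_i$, which is immediate, and that we never put two edges with four distinct endpoints together — but every edge in $Q_i$ is incident to $c_i$, so this holds trivially. Hence no nesting can occur in any $Q_i$.

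For the running time, computing the layout requires: laying down the linear order, which costs $\bigoh(n)$ (or $\bigoh(n + \tau \cdot n)$ if we are pedantic about scanning adjacency information); and assigning each edge to its queue, which for an edge $uv$ amounts to identifying an endpoint in $C$ and, if both are in $C$, taking the one that is smaller in $\prec$ — this is $\bigoh(1)$ per edge after an $\bigoh(\tau)$ preprocessing that records, for each vertex, whether it lies in $C$ and its position. Since $G$ has at most $\tau \cdot n$ edges (each edge has an endpoint among the $\tau$ cover vertices, whose total degree is at most $\tau \cdot n$), the edge-assignment phase runs in $\bigoh(\tau \cdot n)$ time, for a total of $\bigoh(n + \tau \cdot n)$ time as claimed. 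This completes the plan; the argument is short because the shared-endpoint trick makes the nesting-freeness of each queue immediate.
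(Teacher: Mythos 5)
Your proof is correct and takes essentially the same approach as the paper: order the cover vertices first and the independent set after them, then assign each edge to a queue keyed on one of its cover endpoints, so every queue is a star centered at a cover vertex and hence trivially nesting-free. The only (immaterial) cosmetic difference is that the paper routes a cover-cover edge $c_ic_j$ ($i<j$) to the queue of the \emph{larger}-indexed endpoint $c_j$, whereas you route it to the \emph{smaller}-indexed endpoint $c_i$; both choices preserve the shared-endpoint invariant, and the digression about $\binom{\tau}{2}$ queues in your write-up is unnecessary but harmless.
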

\begin{proof}
Denote by $c_1,\ldots,c_\tau$ the $\tau$ vertices of $C$ and let $\prec$ be any linear order of $G$ such that $c_i \prec c_{i+1}$, for $i=1,2,\dots,\tau-1$. A queue assignment $\sigma$ of $G$ on $h$ queues can be obtained as follows.
Let $U=V\setminus C$. For each $i\in [\tau]$ all edges $u c_i$ with $u \in U\cup \{c_1,\dots,c_{i-1}\}$ are assigned to queue $i$.  
Now, consider the edges assigned to any queue $i \in [\tau]$. 
By construction, they are all incident to vertex $c_i$, and thus no two of them nest each other. 
Therefore, the pair $\langle \prec, \sigma \rangle$ is a $\tau$-queue layout of $G$ and can be computed in $\bigoh(n+\tau\cdot n)$ time.\qed\end{proof}

Let $C$ be a vertex cover of size $\tau$ of graph $G$. For any subset $U$ of $C$, a vertex  $v \in V \setminus C$ is of \emph{type $U$} if $N(v) = U$. This defines an equivalence relation on $V \setminus C$ and in particular partitions $V \setminus C$ into at most $\sum_{i=1}^{\tau} {\tau \choose{i}}=2^{\tau-1} < 2^\tau$ distinct types. Denote by $V_U$ the set of vertices of type $U$. 

\begin{lemma}\label{le:kernel}
Let $h \in \mathbb N$ and $v \in V_U$ such that $|V_U| \ge 2 \cdot h^\tau + 2$.
Then $G$ admits an $h$-queue layout if and only if $G'=G - \{v\}$ does. Moreover, an $h$-queue layout of $G'$ can be extended to an $h$-queue layout of $G$ in linear time.
\end{lemma}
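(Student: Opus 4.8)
The plan is to prove the backward direction (the forward direction is immediate, since deleting a vertex cannot increase the queue number), so assume $G'=G-\{v\}$ has an $h$-queue layout $\langle \prec, \sigma\rangle$ and we must reinsert $v$ together with its $|U|\le \tau$ edges to $U$, distributing these edges among the existing $h$ queues without creating a nesting. First I would observe that since $v$ has type $U$, all of $V_U$ (except $v$ itself, which is gone in $G'$) are twins of $v$: each $w\in V_U\setminus\{v\}$ has exactly the neighborhood $U$. So $G'$ contains at least $2h^\tau+1$ such twins, and we only need to find a single position for $v$ relative to $\prec$ and a queue for each of its $|U|$ edges such that no nesting arises. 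The key idea: place $v$ immediately next to a well-chosen twin $w$, and assign each edge $vu$ ($u\in U$) to the same queue $\sigma(wu)$ as the corresponding edge $wu$; then the only possible conflicts are between an edge $vu$ and an edge $wu$ sharing the endpoint $u$ (not independent, hence no nesting issue), between $vu$ and $wu'$ with $u\ne u'$, or between $vu$ and some edge of $G'$ not incident to $w$.

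The main obstacle is controlling nestings of the form ``$vu$ nests (or is nested by) an edge $e$ of $G'$''. Here is where the bound $2h^\tau+2$ enters. For each $u\in U$, the edge $wu$ of $G'$ lies in some queue $q_u=\sigma(wu)\in[h]$, giving a function $U\to[h]$; there are at most $h^\tau$ such functions (more precisely $h^{|U|}$). Assign to each twin $w\in V_U$ its \emph{queue-signature} $(q_u)_{u\in U}$. Since $|V_U\setminus\{v\}|\ge 2h^\tau+1 > 2h^\tau$, by pigeonhole there are at least three twins $w_1,w_2,w_3$ with the same queue-signature; in fact it suffices to get two, say $w_1\prec w_2$ (with no third twin of the same signature forced between them — I would pick $w_1,w_2$ to be the two leftmost twins of their common signature, or more robustly pick them so that the open interval $(w_1,w_2)$ contains no twin of that signature by taking two consecutive ones in $\prec$ among the $\ge 3$ available). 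Now insert $v$ into the interval strictly between $w_1$ and $w_2$, immediately to the right of $w_1$, and give edge $vu$ the queue $q_u=\sigma(w_1u)=\sigma(w_2u)$.

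To verify no nesting: first, no edge $vu$ nests or is nested by another edge $vu'$ incident to $v$ (they share $v$, not independent). Second, consider $vu$ versus an edge $e=ab$ of $G'$ in the same queue $q_u$. If $e$ is incident to $w_1$, then $e=w_1u''$ for some $u''$; since $v$ sits immediately right of $w_1$, the endpoint $w_1$ of $e$ and the endpoint $v$ of $vu$ are adjacent in $\prec$ with nothing between them, so $vu$ and $w_1u''$ cannot nest unless $u''$ separates $u$ appropriately — and in that case the pair $w_1u''$, $w_2u$ (both in queue $q_u$, both edges of $G'$) would already nest or the equal-neighborhood structure of $w_1,w_2$ forces a contradiction; I would spell this out by comparing positions of $u$, $u''$, $w_1$, $w_2$. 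If $e$ is not incident to $w_1$ (nor, symmetrically, to $w_2$), then since $v$ lies strictly between $w_1$ and $w_2$, any edge $e$ nesting $vu$ or nested by $vu$ would correspondingly nest $w_1u$ or $w_2u$ (whichever of $w_1,w_2$ lies on the same side of $e$ as $v$ does relative to $u$) — again contradicting correctness of $\langle\prec,\sigma\rangle$. The remaining bookkeeping is routine case analysis on the relative order of the four endpoints. Finally, the whole reinsertion — computing types, the queue-signatures of the relevant twins, locating two consecutive same-signature twins, and placing $v$ — is done in linear time, completing the proof.
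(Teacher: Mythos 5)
Your high-level plan matches the paper's: pigeonhole on queue-signatures of twins, insert $v$ adjacent to a chosen twin, and reuse that twin's queue assignment for the edges of $v$. However, your claim that ``it suffices to get two'' same-signature twins is where the argument breaks, and this is precisely what the paper's $K_{3,3}$-based step and the subsequent case analysis are designed to handle. With only two same-signature twins $w_1 \prec w_2$, a neighbour from $U$ can sit strictly between them in $\prec$. Concretely, suppose $w_1 \prec u \prec u'' \prec w_2$ with $u,u'' \in U$ and all four edges $w_1u, w_1u'', w_2u, w_2u''$ assigned to the same queue $q$. This is a legal partial layout: $w_1u''$ and $w_2u$ merely cross, $w_1u$ and $w_2u''$ are disjoint, and every remaining pair shares an endpoint. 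If you now insert $v$ immediately to the right of $w_1$ and put $vu$ in queue $q$, the edge $w_1u''$ nests $vu$ (since $w_1 \prec v \prec u \prec u''$), and the contradiction you anticipate from ``$w_1u''$ and $w_2u$ already nesting'' never materialises, because they cross rather than nest. Picking $w_1,w_2$ consecutive among the same-signature twins does not help, since $u,u''$ are members of $U$, not twins, and nothing keeps them out of the interval $(w_1,w_2)$. A symmetric obstacle blocks inserting $v$ just to the left of $w_2$, so no placement inside $(w_1,w_2)$ works for this configuration.

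The paper's choice of three queue-equivalent twins $u_1,u_2,u_3$ is therefore not an over-count. First, because $K_{3,3}$ does not admit a $1$-queue layout, the three twins have at most two common neighbours $w,z$ per queue. Second, the case analysis in the paper's proof shows that these common neighbours must satisfy $w \prec u_1 \prec u_2 \prec u_3 \prec z$; the cases that exclude a neighbour from lying between two twins (notably Cases~B.3 and C.2) explicitly require the presence of $u_3$, and are exactly the configuration that defeats your two-twin argument. Once this structural fact is available, inserting $v$ immediately to the right of $u_1$ and copying $u_1$'s queue assignment is safe, and the rest of your verification for edges not incident to the twins goes through essentially as you sketched. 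To repair your proof, you should keep all three same-signature twins, derive the per-queue bound of two edges via the $K_{3,3}$ observation, and establish the ordering $w \prec u_1 \prec u_2 \prec u_3 \prec z$ before placing $v$.
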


The proof of \cref{le:kernel} is deferred to \cref{ss:lekernel}.

\begin{proof}[of \cref{th:vc-main}]
By \cref{pro:compute-vcn}, we can determine the vertex cover number $\tau$ of $G$ and compute a vertex cover $C$ of size $\tau$ in time $\bigoh(2^\tau+\tau\cdot n)$. 
With \cref{le:kernel} in hand, we can then apply a binary search on the number of queues $h \le \tau$ as follows. 
If $h > \tau$, by \cref{lem:vc} we can immediately conclude that $G$ admits a $\tau$-queue layout and compute one in $\bigoh(n+\tau\cdot n)$ time. Hence we shall assume that $h \le \tau$. We construct a kernel $G^*$ from $G$ of size $h^{\bigoh(\tau)}$ as follows. We first classify each vertex of $G$ based on its type. We then remove an arbitrary vertex from each set $V_U$ with $|V_U| > 2 \cdot h^\tau + 1$ until $|V_U| \le 2 \cdot h^\tau + 1$. Thus, constructing $G^*$ can be done in $\bigoh(2^\tau + \tau \cdot n)$ time, since $2^\tau$ is the number of types and $\tau \cdot n$ is the maximum number of edges of $G$.
From \cref{le:kernel} we conclude that $G$ admits an $h$-queue layout if and only if $G^*$ does. 

Given a linear order $\prec^*$ of $G^*$, a queue assignment $\sigma^*$ such that $\langle \prec^*, \sigma^* \rangle$ is an $h$-queue layout of $G^*$ exists if and only if $\sigma^*$ contains no $h$-rainbow~\cite{HLR92}, i.e., $h$ independent edges that  pairwise nest, which can be easily checked (and computed if it exists) in $h^{\bigoh(\tau)}$ time~\cite{HLR92}. Consequently, determining whether $G^*$ admits an $h$-queue layout can be done by first guessing all linear orders, and then for each of them by testing for the existence of an $h$-rainbow. Since we have $2^\tau$ types, and each of the at most $2 \cdot h^\tau + 1$ elements of the same type are equivalent in the queue layout (that is, the position of two elements of the same type can be exchanged in $\prec^*$ without affecting $\sigma^*$), the number of linear orders can be upper bounded by $(2^\tau)^{\bigoh(h^\tau)}=2^{\tau^{\bigoh(\tau)}}$. Thus, whether $h$ queues suffice for $G^*$ can be determined in $2^{\tau^{\bigoh(\tau)}} \cdot h^{\bigoh(\tau)}=2^{\tau^{O(\tau)}}$ time. 
An $h$-queue layout of $G^*$ (if any) can be extended to one of $G$ by iteratively applying the constructive procedure of \cref{le:kernel}, in $\bigoh(\tau \cdot n)$ time.
Finally, by applying a binary search on $h$ we obtain an overall time complexity of $\bigoh(2^{{\tau^{\bigoh(\tau)}}}+\tau \log \tau \cdot n)$, as desired.
\qed\end{proof}

\subsection{Proof of \cref{le:kernel}}\label{ss:lekernel}
One direction follows easily, since removing a vertex from an $h$-queue layout still gives an $h$-queue-layout of the resulting graph. So let $\langle \prec, \sigma \rangle$ be an $h$-queue layout of $G'$. 
We prove that an $h$-queue layout of $G$ can be constructed by inserting $v$ immediately to the right of a suitable vertex $u$ in $V_U$ and by assigning the edges of $v$ to the same queues as the corresponding edges of $u$. 

We say that two vertices $u_1, u_2 \in V_U$ are \emph{queue equivalent}, if for each vertex $w \in U$, the edges $u_1w$ and $u_2w$ are both assigned to the same queue according to $\sigma$. 
Each vertex in $V_U$ has degree exactly $|U|$, hence this relation partitions the vertices of $V_U$ into at most 
$h^{|U|} \le h^\tau$ sets. 
Let $V^*_U = V_U \setminus \{v\}$. Since $|V^*_U| \ge 2 \cdot h^\tau + 1$, at least three vertices of this set, which we denote by $u_1$, $u_2$, and $u_3$,  are queue equivalent. 
Consider now the graph induced by the edges of these three vertices that are assigned to a particular queue. 
By the above argument, such a graph is a $K_{l,3}$, for some $l>0$. However, $K_{3,3}$ does not admit a $1$-page queue layout, because any graph with queue number $1$ is planar~\cite{HR92}. As a consequence, $l \le 2$, that is, each $u_i \in V^*_U $ has at most two edges on each queue. Denote such two edges by $u_iw$ and $u_iz$ and assume, w.l.o.g., that $u_1 \prec u_2 \prec u_3$ and $w \prec z$. We now claim that $w \prec u_1 \prec u_2 \prec u_3 \prec z$, else two edges would nest. We can  distinguish a few cases based on the position of $u_1$ (recall that $u_1 \prec u_2 \prec u_3$), refer to \cref{fig:cases} for an illustration.

\begin{figure}[h!]
\centering
\subfigure[Case A]{\includegraphics[page=1]{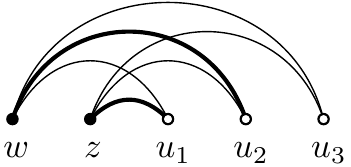}}\hfill
\subfigure[Case B.1]{\includegraphics[page=2]{figs/cases}}\hfill
\subfigure[Case B.2]{\includegraphics[page=3]{figs/cases}}
\subfigure[Case B.3]{\includegraphics[page=4]{figs/cases}}\hfill
\subfigure[Case C.1]{\includegraphics[page=5]{figs/cases}}\hfill
\subfigure[Case C.2]{\includegraphics[page=6]{figs/cases}}
\caption{Illustration for the proof of \cref{le:kernel}.}\label{fig:cases}
\end{figure}
 
\begin{itemize}

\item \textbf{Case A:} $w \prec z \prec u_1$, then the nesting edges are $z u_1$ and $w u_2$.

\item \textbf{Case B:} $u_1 \prec w \prec z$, then we distinguish three more subcases. 

\begin{itemize}
\item \textbf{Case B.1:} $u_2 \prec w$, then the nesting edges are $u_1 z$ and $u_2 w$. 
\item \textbf{Case B.2:} $w \prec u_2 \prec z$, then the nesting edges are $u_1 z$ and $w u_2$. 
\item \textbf{Case B.3:} $z \prec u_2$, then the nesting edges are $z u_2$ and $w u_3$. 
\end{itemize}

\item \textbf{Case C:} $w \prec u_1 \prec z$, if $w \prec u_2 \prec u_3 \prec z$ the claim follows. Else, we have two more subcases based again on the position of $u_2$.

\begin{itemize}
\item \textbf{Case C.1:} $w \prec z \prec u_2$, then the nesting edges are $w u_2$ and  $u_1 z$.
\item \textbf{Case C.2:} $w \prec u_2  \prec z \prec u_3$, then the nesting edges are $w u_3$ and $u_1 z$. 
\end{itemize}

\end{itemize}

\noindent It follows that we can extend $\prec$ by introducing $v$ as the first vertex to the right of $u_1$ and, for each edge $vw$ such that $w \in U$, we can assign $vw$ to the same queue as $u_1w$. This operation does not introduce any nesting. Namely, if $vw$ is assigned to a queue containing only one edge of $u_1$, the graph induced by the edges in this queue is a star with center $w$ and no two edges can nest. If  $vw$ is assigned to a queue containing two edges of $u_1$, say $u_1 w$ and $u_1 z$, then we know that all vertices of $V_U$ are between $w$ and $z$ in $\prec$ and again no two edges nest. 

\section{Conclusions and Open Problems}\label{sec:conclusions}

We proved that $h$-\qnp is fixed-parameter tractable parameterized by treedepth for $h=1$, and by the vertex cover number for arbitrary $h \ge 1$. Several interesting questions arise from our research, among them: 

\begin{enumerate}
\item A first natural question is to understand whether \cref{thm:main} can be extended to the general case ($h \ge 1$). In particular, our arguments establishing the existence of interleaving components already fail for $h=2$.
\item Extending \cref{thm:main} to graphs of bounded treewidth is also an interesting problem; here the main issue is to be able to forget information about vertices in a partial order, thus an approach based on testing arched leveled-planarity might be more suitable. 

\item Finally, we mention the possibility of studying the parameterized complexity of \emph{mixed} linear layouts, using both queues and stacks, see~\cite{DBLP:conf/gd/ColKN19,DBLP:journals/dmtcs/DujmovicW05,HR92,DBLP:conf/gd/Pupyrev17}.

\end{enumerate} 

\clearpage

\bibliographystyle{splncs04}
\bibliography{GD-ref}

\begin{thebibliography}{10}
\providecommand{\url}[1]{\texttt{#1}}
\providecommand{\urlprefix}{URL }
\providecommand{\doi}[1]{https://doi.org/#1}

\bibitem{BDDEW18}
Bannister, M.J., Devanny, W.E., Dujmovi{\'{c}}, V., Eppstein, D., Wood, D.R.:
  Track layouts, layered path decompositions, and leveled planarity.
  Algorithmica  (2018). \doi{10.1007/s00453-018-0487-5}

\bibitem{DBLP:journals/siamcomp/BekosFGMMRU19}
Bekos, M.A., F{\"{o}}rster, H., Gronemann, M., Mchedlidze, T., Montecchiani,
  F., Raftopoulou, C.N., Ueckerdt, T.: Planar graphs of bounded degree have
  bounded queue number. {SIAM} J. Comput.  \textbf{48}(5),  1487--1502 (2019).
  \doi{10.1137/19M125340X}

\bibitem{BCLR96}
Bhatt, S.N., Chung, F.R.K., Leighton, F.T., Rosenberg, A.L.: Scheduling
  tree-dags using {FIFO} queues: {A} control-memory trade-off. J. Parallel
  Distrib. Comput.  \textbf{33}(1),  55--68 (1996).
  \doi{10.1006/jpdc.1996.0024}

\bibitem{JGAA-526}
Bhore, S., Ganian, R., Montecchiani, F., Nöllenburg, M.: Parameterized
  algorithms for book embedding problems. J. Graph Algorithms Appl.  (2020).
  \doi{10.7155/jgaa.00526}

\bibitem{DBLP:journals/tcs/ChenKX10}
Chen, J., Kanj, I.A., Xia, G.: Improved upper bounds for vertex cover. Theor.
  Comput. Sci.  \textbf{411}(40-42),  3736--3756 (2010).
  \doi{10.1016/j.tcs.2010.06.026}

\bibitem{DBLP:conf/gd/ColKN19}
de~Col, P., Klute, F., N{\"o}llenburg, M.: Mixed linear layouts: Complexity,
  heuristics, and experiments. In: Archambault, D., Tóth, C.D. (eds.) Graph
  Drawing and Network Visualization (GD'19). LNCS, vol. 11904, pp. 460--467.
  Springer (2019). \doi{10.1007/978-3-030-35802-0\_35}

\bibitem{DBLP:books/sp/CyganFKLMPPS15}
Cygan, M., Fomin, F.V., Kowalik, L., Lokshtanov, D., Marx, D., Pilipczuk, M.,
  Pilipczuk, M., Saurabh, S.: Parameterized Algorithms. Springer (2015).
  \doi{10.1007/978-3-319-21275-3}

\bibitem{BFP13}
{Di Battista}, G., Frati, F., Pach, J.: On the queue number of planar graphs.
  {SIAM} J. Comput.  \textbf{42}(6),  2243--2285 (2013).
  \doi{10.1137/130908051}

\bibitem{DBLP:journals/comgeo/GiacomoLM05}
{Di Giacomo}, E., Liotta, G., Meijer, H.: Computing straight-line 3d grid
  drawings of graphs in linear volume. Comput. Geom.  \textbf{32}(1),  26--58
  (2005). \doi{10.1016/j.comgeo.2004.11.003}

\bibitem{DBLP:series/txcs/DowneyF13}
Downey, R.G., Fellows, M.R.: Fundamentals of Parameterized Complexity. Texts in
  Computer Science, Springer (2013). \doi{10.1007/978-1-4471-5559-1}

\bibitem{Duj15}
Dujmovi{\'c}, V.: Graph layouts via layered separators. J. Comb. Theory, Ser.
  {B}  \textbf{110},  79--89 (2015). \doi{10.1016/j.jctb.2014.07.005}

\bibitem{DBLP:conf/focs/DujmovicJMMUW19}
Dujmovi{\'c}, V., Joret, G., Micek, P., Morin, P., Ueckerdt, T., Wood, D.R.:
  Planar graphs have bounded queue-number. In: Foundations of Computer Science
  (FOCS'19). pp. 862--875. {IEEE} (2019). \doi{10.1109/FOCS.2019.00056}

\bibitem{DBLP:journals/siamcomp/DujmovicMW05}
Dujmovi{\'c}, V., Morin, P., Wood, D.R.: Layout of graphs with bounded
  tree-width. {SIAM} J. Comput.  \textbf{34}(3),  553--579 (2005).
  \doi{10.1137/S0097539702416141}

\bibitem{DMW17}
Dujmovi{\'c}, V., Morin, P., Wood, D.R.: Layered separators in minor-closed
  graph classes with applications. J. Comb. Theory, Ser. {B}  \textbf{127},
  111--147 (2017). \doi{10.1016/j.jctb.2017.05.006}

\bibitem{DBLP:journals/dmtcs/DujmovicW04}
Dujmovi{\'c}, V., Wood, D.R.: On linear layouts of graphs. Discrete Math.
  Theor. Comput. Sci.  \textbf{6}(2),  339--358 (2004)

\bibitem{DBLP:journals/dmtcs/DujmovicW05}
Dujmovi{\'c}, V., Wood, D.R.: Stacks, queues and tracks: Layouts of graph
  subdivisions. Discrete Math. Theor. Comput. Sci.  \textbf{7}(1),  155--202
  (2005)

\bibitem{DBLP:journals/ai/GanianO18}
Ganian, R., Ordyniak, S.: The complexity landscape of decompositional
  parameters for {ILP}. Artificial Intelligence  \textbf{257},  61--71 (2018).
  \doi{10.1016/j.artint.2017.12.006}

\bibitem{GanianPeitlSlivovskySzeider20}
Ganian, R., Peitl, T., Slivovsky, F., Szeider, S.: Fixed-parameter tractability
  of dependency {QBF} with structural parameters. In: Principles of Knowledge
  Representation and Reasoning (KR'20) (2020), to appear.

\bibitem{DBLP:journals/siamdm/GutinJW16}
Gutin, G.Z., Jones, M., Wahlstr{\"{o}}m, M.: The mixed {C}hinese postman
  problem parameterized by pathwidth and treedepth. {SIAM} J. Discrete Math.
  \textbf{30}(4),  2177--2205 (2016). \doi{10.1137/15M1034337}

\bibitem{HLR92}
Heath, L.S., Leighton, F.T., Rosenberg, A.L.: Comparing queues and stacks as
  mechanisms for laying out graphs. {SIAM} J. Discrete Math.  \textbf{5}(3),
  398--412 (1992). \doi{10.1137/0405031}

\bibitem{HR92}
Heath, L.S., Rosenberg, A.L.: Laying out graphs using queues. {SIAM} J. Comput.
   \textbf{21}(5),  927--958 (1992). \doi{10.1137/0221055}

\bibitem{DBLP:books/daglib/0030491}
Ne\v{s}et\v{r}il, J., {Ossona de Mendez}, P.: Sparsity -- Graphs, Structures,
  and Algorithms, Algorithms and combinatorics, vol.~28. Springer (2012).
  \doi{10.1007/978-3-642-27875-4}

\bibitem{Oll73}
Ollmann, T.: On the book thicknesses of various graphs. In: Southeastern
  Conference on Combinatorics, Graph Theory and Computing. Congressus
  Numerantium, vol.~VIII, p.~459 (1973)

\bibitem{Pem92}
Pemmaraju, S.V.: Exploring the powers of stacks and queues via graph layouts.
  Ph.D. thesis, Virginia Tech (1992)

\bibitem{DBLP:conf/gd/Pupyrev17}
Pupyrev, S.: Mixed linear layouts of planar graphs. In: Frati, F., Ma, K.L.
  (eds.) Graph Drawing and Network Visualization (GD'17). LNCS, vol. 10692, pp.
  197--209. Springer (2018). \doi{10.1007/978-3-319-73915-1\_17}

\bibitem{ReidlRVS14}
Reidl, F., Rossmanith, P., Villaamil, F.S., Sikdar, S.: A faster parameterized
  algorithm for treedepth. In: Automata, Languages, and Programming
  {(ICALP'14)}. LNCS, vol.~8572, pp. 931--942. Springer (2014).
  \doi{10.1007/978-3-662-43948-7\_77}

\bibitem{T72}
Tarjan, R.E.: Sorting using networks of queues and stacks. J. {ACM}
  \textbf{19}(2),  341--346 (1972). \doi{10.1145/321694.321704}

\bibitem{Wie17}
Wiechert, V.: On the queue-number of graphs with bounded tree-width. Electr. J.
  Comb.  \textbf{24}(1),  P1.65 (2017). \doi{10.37236/6429}

\bibitem{Yan89}
Yannakakis, M.: Embedding planar graphs in four pages. J. Comput. Syst. Sci.
  \textbf{38}(1),  36--67 (1989). \doi{10.1016/0022-0000(89)90032-9}

\end{thebibliography}
\clearpage

\appendix

\section*{Appendix}

\section{Missing Proofs for \cref{sec:td}}\label{ap:td}

\setcounter{lemma}{1}
\begin{lemma}
\dbound
\end{lemma}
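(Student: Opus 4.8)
The plan is to first understand the geometry forced by the definition of the delimiting sequence, and then to bound $d$ either by a direct nesting argument or by producing a long path in $G$ and invoking \cref{pro:path-td} (recall $\td(G)\le k$, so $G$ has no path of length $2^k$).

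Two structural facts come first. Since $\equiv_\prec$ forces $v_j\prec w_j$ for every component of $U$ and $vw$ is an edge, $C_j$ is $vw$-separate from $C_i$ precisely when the intervals $[v_i,w_i]$ and $[v_j,w_j]$ are disjoint. A short induction on $\ell$ using clauses (i)--(iii) in the definition of $D_\ell$ then shows the intervals $I_\ell:=[v_{D_\ell},w_{D_\ell}]$ are pairwise disjoint and appear left to right along $\prec$, i.e.\ $v_{D_1}\prec w_{D_1}\prec v_{D_2}\prec w_{D_2}\prec\dots\prec v_{D_d}\prec w_{D_d}$; and the ``leftmost'' clause implies that every component of $U$ whose $v$-vertex lies strictly between $v_{D_\ell}$ and $v_{D_{\ell+1}}$ is $vw$-interleaving with $D_\ell$, hence has its $v$-vertex inside $I_\ell$. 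Secondly, since $G$ is connected and each member of $A_t$ is a connected component of $G-P_t$, every component of $U$ sends an edge to $P_t$; as the components of $U$ are pairwise $\sim$-equivalent, there is a single template vertex $z$ and a single $p^\star\in P_t$ such that in every $D_\ell$ the counterpart $z_\ell$ of $z$ is adjacent to $p^\star$, and by $\equiv_\prec$ all the $z_\ell$ lie in one $P_t$-slot of $\prec$, hence all on the same side of $p^\star$.

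From here I would argue as follows. Among the anchor edges $z_1p^\star,\dots,z_dp^\star$, the one whose endpoint $z_\ell$ is extremal on that common side spans an interval that contains a whole contiguous run of the pairwise-disjoint intervals $I_{\ell'}$; if that run contained an $I_{\ell'}$ lying strictly inside it with $\ell'$ different both from $\ell$ and from the index of the component carrying $z_\ell$, then $z_\ell p^\star$ would nest $e_{\ell'}=v_{D_{\ell'}}w_{D_{\ell'}}$ in the $1$-queue layout of $G-B$ (four distinct endpoints), a contradiction. This forbids long runs; chaining this observation down the greedy sequence, or equivalently re-reading the chain of forced nestings as a single path in $G$ and applying \cref{pro:path-td}, yields $d\le 2^{k+1}+1$, where the factor $2$ accounts for which side of $p^\star$ the anchor endpoints lie on.

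I expect the main obstacle to be handling the degenerate configurations precisely: the chosen edge $vw$ need not be incident to $P_t$ at all, so how the anchor edges $z_\ell p^\star$ interact with the signature edges $e_{\ell'}$ depends on the fixed but a priori unknown relative order of $z$, $v$ and $w$ inside a component and on which $P_t$-slot contains $p^\star$. One must enumerate these cases and in each either produce a forbidden nesting between two independent edges or carve out a path long enough for \cref{pro:path-td}, and then verify that the count closes out to exactly $2^{k+1}+1$ rather than something weaker. A secondary subtlety is that $\equiv_\prec$ controls only the order of $P_t\cup C$ for a single component $C$, never the mutual order of two different components, so every step must rely solely on the information that $\equiv_\prec$ (and $\sim$) genuinely provides.
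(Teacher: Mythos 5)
Your setup is sound---pairwise disjoint intervals $I_\ell$, a common template vertex $z$ with a common neighbor $p^\star\in P_t$ forced by $\sim$, and $P_t$ lying entirely outside the span of the intervals---but the central step does not go through. You take the extremal anchor edge $z_\ell p^\star$ and claim it must nest some $e_{\ell'}=v_{\ell'}w_{\ell'}$. That requires $z_\ell$ to sit past $I_{\ell'}$ on the far side from $p^\star$, but $\equiv_\prec$ fixes $z_\ell$ only relative to $P_t$ and to the other vertices of the \emph{same} component $D_\ell$; it says nothing about the position of $z_\ell$ versus $v_{\ell'},w_{\ell'}$ for $\ell'\neq\ell$. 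Nothing forces $z$ to lie between $v$ and $w$ in the template, and if $z$ precedes $v$ there, it is entirely consistent for every $z_\ell$ to sit in the narrow gap between $p^\star$ and $v_1$, in which case no anchor edge spans any interval and the extremal-edge argument yields nothing. You flag exactly this concern in your final paragraph, but the ``chaining'' meant to repair it is never made precise. (Also, the factor~$2$ in $2^{k+1}+1$ comes from demanding $2^k$ separated components on \emph{each side} of a central $D_\ell$, not from which side of $p^\star$ the anchors lie.)

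The paper's proof avoids anchoring on $z_\ell$ altogether. Assuming $d\geq 2^{k+1}+2$, it fixes a middle component $D_\ell$ with at least $2^k$ separated components on each side, and uses connectivity of $G$ to take a path from $v=v_\ell$ (which is \emph{guaranteed} to lie inside $I_\ell$) to some $p\in P_t$ (which is guaranteed, by $\equiv_\prec$, to lie outside the whole span, left of $v_1$ or right of $w_d$). Since no edge of this path may nest any $v_{\ell'}w_{\ell'}$, the path cannot overstep any interval $I_{\ell'}$ in one hop and must therefore place a distinct vertex inside each of at least $2^k$ disjoint intervals. This gives a path of length at least $2^k$, contradicting \cref{pro:path-td}. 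The crucial difference from your attempt is that $v_\ell$, unlike $z_\ell$, is provably far from $P_t$ in $\prec$, which is what makes the path long; this is the missing ingredient that turns your informal ``chain of forced nestings'' into an actual proof.
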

\begin{proof}
Consider for a contradiction that there exists a component $D_\ell$ such that $\ell>2^k$ and $\ell<d-2^k$, i.e., that there is a sequence of pairwise $vw$-separated components to the left as well as to the right of $D_\ell$. By the connectivity of $G$, there must be a path from $v$ to some vertex in $P_t$, say $p$. However, by the definition of $\equiv_\prec$ every vertex in $P_t$ lies either to the left of $v_1$ or to the right of $w_d$, and hence a path from $v$ to $p$ would need to pass through a sequence of $2^k$ edges forming disjoint intervals in the linear order $\prec$. Since nestings are not allowed, such a path must have at least one vertex inside each of these intervals, and hence its length is at least $2^k$, which contradicts \cref{pro:path-td}.
\end{proof}

\begin{lemma}
\interleavetrans
\end{lemma}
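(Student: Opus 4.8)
The plan is to reason directly in the linear order $\prec$ of $G-B$, treating each edge $v_iw_i$ of a component $C_i\in U$ as an interval. Since all components of $U$ lie in the same $\equiv_\prec$-class, the relative order of $v_i$ and $w_i$ is the same for every $i$, so I may assume w.l.o.g.\ that $v_i\prec w_i$ for all $i$, i.e.\ $v_iw_i$ spans the interval $[v_i,w_i]$. A preliminary observation I would record first: the intervals of the delimiting components are pairwise disjoint and ordered left to right. Indeed, consecutive $D_{\ell-1},D_\ell$ are $vw$-separate with $v_{\ell-1}\prec v_\ell$, and since $v_{\ell-1}w_{\ell-1}$ and $v_\ell w_\ell$ are edges they cannot nest, so being separate forces $w_{\ell-1}\prec v_\ell$; transitivity yields $v_1\prec w_1\prec v_2\prec w_2\prec\cdots\prec v_d\prec w_d$.

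The key step is to show that every component $C_q\in U$ assigned to $D_\ell$ with $C_q\neq D_\ell$ is already $vw$-interleaving with $D_\ell$, and in fact $v_\ell\prec v_q\prec w_\ell\prec w_q$. Suppose not. Since $v_qw_q$ and $v_\ell w_\ell$ are independent edges they cannot nest, so the only remaining possibility is that $C_q$ is $vw$-separate from $D_\ell$. By the assignment rule $v_\ell\prec v_q$, so $C_q$ satisfies properties (i) and (ii) of the definition of the delimiting sequence relative to $D_\ell$; if $\ell<d$, then $v_q\prec v_{\ell+1}$ (again by the assignment rule, as $v_\ell$ is the rightmost $D$-vertex left of $v_q$), contradicting the minimality (iii) in the choice of $D_{\ell+1}$, and if $\ell=d$ it contradicts the maximality of $d$. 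Hence $C_q$ interleaves $D_\ell$, and because $v_\ell\prec v_q$ the crossing must be $v_\ell\prec v_q\prec w_\ell\prec w_q$; in particular $v_q\prec w_\ell$.

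It then remains to combine these facts for the two components $C_q,C_p$ assigned to $D_\ell$. If one of them equals $D_\ell$ the claim is exactly the key step applied to the other. Otherwise I apply the key step to both, obtaining $v_q\prec w_\ell\prec w_q$ and $v_p\prec w_\ell\prec w_p$; assuming w.l.o.g.\ $v_q\prec v_p$ (the four endpoints are distinct since the components are vertex-disjoint) this gives $v_q\prec v_p\prec w_\ell\prec w_q$, so the interval $[v_p,w_p]$ starts strictly inside $[v_q,w_q]$. Since $v_qw_q$ and $v_pw_p$ are independent edges they cannot nest, and they are plainly not disjoint, so they cross, i.e.\ $C_q$ and $C_p$ are $vw$-interleaving. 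The main obstacle I anticipate is the key step: one has to notice that the extremal (leftmost) rule defining the delimiting sequence is precisely what forces a component to interleave the $D_\ell$ it is assigned to, and that the resulting one-sided inequality $v_q\prec w_\ell$ — shared by all components assigned to $D_\ell$ — is exactly the ingredient that, together with the no-nesting property of edges, makes any two of them cross.
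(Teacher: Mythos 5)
Your argument is correct and in essence matches the paper's own proof: every component assigned to $D_\ell$ has its $v$-endpoint to the left of $w_\ell$ and its $w$-endpoint to the right, so the $vw$-intervals of $C_q$ and $C_p$ both straddle $w_\ell$ and therefore overlap, and since the two edges cannot nest they must cross. Your ``key step'' — deducing $v_\ell\prec v_q\prec w_\ell\prec w_q$ from the extremal (leftmost) rule defining $D_{\ell+1}$ and the maximality of $d$ — is a welcome justification of a claim the paper's appendix proof merely asserts.
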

\begin{proof}
Assume w.l.o.g.\ that $w_\ell$ is placed to the right of $v_\ell$. Since both $C_q$ and $C_p$ are assigned to $D_\ell$, the counterparts $w_q$ and $w_p$ to $w_\ell$ must be placed to the right of $w_\ell$ while the counterparts $v_q$ and $v_p$ to $v_\ell$ must be placed to the left of $w_\ell$. Hence $C_q$ and $C_p$ cannot be $vw$-separate, and the observation follows by recalling that $C_q$ and $C_p$ cannot be $vw$-nesting either. 
\end{proof}

\begin{lemma}
\consistentorder
\end{lemma}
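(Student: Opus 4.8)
The plan is a proof by contradiction that leverages only the connectivity of the component and the fact that the given $1$-queue layout $\prec$ of $G-B$ has no nesting; in particular, the stronger property that $C_x$ and $C_y$ are $vw$-interleaving for \emph{every} edge $vw$ is not needed here. Suppose $v_x\prec v_y$ for some vertex $v$ of the component, and assume for contradiction that some vertex $w$ of the component has $w_y\prec w_x$. First I would assign to every vertex $p$ of the component an \emph{orientation}, namely $+$ if $p_x\prec p_y$ and $-$ if $p_y\prec p_x$; this is well defined since $C_x$ and $C_y$ are disjoint, so $p_x\neq p_y$. By hypothesis $v$ has orientation $+$ and $w$ has orientation $-$, so the orientation is not constant over the component.

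Next I would use that $C_x$, being a connected component of $G-P_t$, is connected: there is a path $v_x=p^0_x,p^1_x,\dots,p^m_x=w_x$ in $C_x$, and since $C_x\sim C_y$ through a renaming $\eta$ that preserves adjacencies, the counterparts form a path $v_y=p^0_y,\dots,p^m_y=w_y$ in $C_y$. Because the orientation of $p^0$ is $+$ and that of $p^m$ is $-$, there is a consecutive pair $p^j,p^{j+1}$ of opposite orientation; write $\{a,b\}$ for this pair with $a$ of orientation $+$ (so $a_x\prec a_y$) and $b$ of orientation $-$ (so $b_y\prec b_x$). Since $ab$ is an edge of the component, $a_xb_x$ is an edge of $C_x$ and $a_yb_y$ is an edge of $C_y$, and these are independent edges of $G-B$ with four distinct endpoints (as $C_x\cap C_y=\emptyset$ and $a\neq b$).

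Then I would finish with a two-case distinction according to the order $\prec$ induces on the component, which is common to $C_x$ and $C_y$ because $C_x,C_y\in U$ and hence $C_x\equiv_\prec C_y$. If $a$ precedes $b$ in this order, chaining $a_x\prec a_y$, $a_y\prec b_y$, and $b_y\prec b_x$ gives $a_x\prec a_y\prec b_y\prec b_x$, so $a_xb_x$ nests $a_yb_y$. If $b$ precedes $a$, chaining $b_y\prec b_x$, $b_x\prec a_x$, and $a_x\prec a_y$ gives $b_y\prec b_x\prec a_x\prec a_y$, so $a_yb_y$ nests $a_xb_x$. Either outcome contradicts the validity of the $1$-queue layout of $G-B$ with order $\prec$, so no such $w$ can exist, which is exactly the statement of the lemma (the ``w.l.o.g.'' in the statement being the symmetric choice of which of $v_x,v_y$ is on the left).

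I do not expect a genuine obstacle; the only points requiring care are routine: verifying that the counterpart sequence in $C_y$ really is a path (immediate, since $\sim$ preserves adjacency) and that in each case the two edges are independent (immediate from disjointness of $C_x$ and $C_y$ and $a\neq b$). The conceptual heart of the argument is simply that a ``flip'' of orientation along an edge of the component forces a nesting, which is forbidden.
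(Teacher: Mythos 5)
Your proof is correct and takes essentially the same route as the paper's: both take a path from $v_x$ to $w_x$ in $C_x$ (with its counterpart path in $C_y$), locate an edge along which the left/right orientation flips, and derive a forbidden nesting of $a_xb_x$ with $a_yb_y$. Your write-up is slightly more explicit in that it spells out the two cases on the common internal order (using $C_x \equiv_\prec C_y$) to rule out the crossing alternative and conclude genuine nesting, a step the paper's one-line proof leaves implicit.
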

\begin{proof}
Consider for a contradiction that there is a vertex $w_x$ to the right of $w_y$. Consider a $v_x$-$w_x$ path $P_x$ in $G[C_x]$, and let $P_y$ be the $v_y$-$w_y$ path in $G[C_y]$ consisting of the counterparts of $P_x$. Let $a_xb_x$ be the first edge on $P_x$ such that $a_x$ is placed to the left of $a_y$ but $b_x$ is placed to the right of $b_y$. Then the edges $a_xb_x$ and $a_yb_y$ would be nesting, contradicting the correctness of $\prec$.
\end{proof}

\begin{lemma}
\lemprec
\end{lemma}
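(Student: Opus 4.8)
The goal is to show that the linear order $\prec'$ constructed by processing the blocks of $C_x$ yields a valid $1$-queue layout of $G$, i.e., that no two edges of $G$ nest in $\prec'$. Since $\prec'$ extends $\prec$, which is already a valid $1$-queue layout of $G-B$, the only pairs of edges we must worry about are those where at least one edge has an endpoint in $B$. The plan is to split the analysis according to the type of the edge incident to $B$: (a) an edge $v_Bw$ with $w\in P_t$; (b) an edge $v_Bw_B$ with both endpoints in $B$. Recall from \cref{obs:components} that a vertex of $B$ has no neighbors outside $B\cup P_t$, so these two cases are exhaustive for edges incident to $B$. A third subcase to keep in mind is when the edge incident to $B$ is the \emph{inner} (nested) edge rather than the outer one; I would handle this symmetrically wherever it arises, as the sketch already indicates.

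\medskip

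First I would record the structural facts that make the argument go through. By construction, $\prec'$ restricted to $B$ is isomorphic (via $\eta$) to $\prec'$ restricted to $C_x$, so no two edges with both endpoints in $B$ can nest, precisely because no two edges with both endpoints in $C_x$ nest in $\prec$. Next, the key ``sandwich'' property is \cref{obs:relative}: for every vertex $v_x\in C_x$, we have $v_x\prec' v_B\prec' v_y$. I would then dispatch case (a): for an edge $v_Bw$ with $w\in P_t$ and (w.l.o.g.) $v_B\prec' w$, since $C_x$, $C_y$, $B$ are equivalent components, $v_x$ and $v_y$ are also to the left of $w$; if $v_Bw$ nested some edge $ab$, then because $v_x\prec' v_B$, the edge $v_xw$ would also nest $ab$, contradicting validity of $\prec$; and if $v_Bw$ were nested by $ab$, then since $v_B\prec' v_y\prec' w$, the edge $ab$ would nest $v_yw$, again a contradiction. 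This is the easy case.

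\medskip

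The substantial case is (b), an edge $v_Bw_B$ with both endpoints in $B$, being the \emph{outer} edge of a nesting pair $v_Bw_B$ over $ab$. Here I would use that a block cannot contain two adjacent vertices (the remark after \cref{def:block}), so $v_x$ and $w_x$ lie in distinct blocks $L_i$ and $L_j$ with $i<j$; rename $v_x=v_x^{i,i'}$, $w_x=v_x^{j,j'}$ and correspondingly in $B$ and $C_y$. Applying \cref{obs:relative} to both endpoints and using that $v_y^{i,i'}v_y^{j,j'}$ cannot nest $ab$ (resp.\ $v_x^{i,i'}v_x^{j,j'}$ cannot nest $ab$) in $\prec$, I would pin down $v_B^{i,i'}\prec' a\prec' v_y^{i,i'}$ and $v_x^{j,j'}\prec' b\prec' v_B^{j,j'}$. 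The crux is then to derive a contradiction from $a$ lying strictly between $v_B^{i,i'}$ and $v_y^{i,i'}$: either $a$ equals one of the special vertices $v_y^{i,1}$ or $v_y^{i,i'}$, or it lies strictly between $v_y^{i,1}$ and $v_y^{i,i'}$. In the generic subcase I would invoke connectivity of $C_x$ to find a neighbor $v_x^{l,j^*}$ of $v_x^{i,1}$ that, by the block-maximality condition (2) of \cref{def:block}, must lie to the right of $v_x^{i,1}$; non-nesting of $\prec$ forces $v_x^{l,j^*}\prec' b$ and $v_y^{l,j^*}\prec' b$, which forces $l=j$, and then $v_B^{j,j'}\prec' v_y^{j,j^*}\prec' b$ contradicts $b\prec' v_B^{j,j'}$. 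In the boundary subcases $a=v_y^{i,1}$ or $a=v_y^{i,i'}$, the vertex $b$ is either in $C_y$ — handled by the same argument with $b=v_y^{j,j^*}$ and the observation that $v_B^{j,j'}$ must precede $v_y^{j,j^*}$ — or in $P_t$, in which case $v_x^{j,j'}\prec' b\prec' v_y^{j,j'}$ contradicts that $C_x$ and $C_y$ are $\equiv_\prec$-equivalent (a vertex of $P_t$ cannot fall strictly between corresponding vertices of $C_x$ and $C_y$). I expect this last case analysis — in particular keeping the block indices straight and correctly applying condition (2) of the block definition to relocate neighbors of $v_x^{i,1}$ — to be the main obstacle; everything else is a fairly mechanical transfer of non-nesting from $\prec$ to $\prec'$ via the sandwich inequalities of \cref{obs:relative}. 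Finally, I would remark that the case where the edge incident to $B$ is the \emph{inner} edge is symmetric to the ones treated, so the four cases above exhaust all possibilities, completing the proof that $\prec'$ is a valid $1$-queue layout and hence the proof of \cref{lem:main}.
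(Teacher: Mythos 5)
Your treatment of case (a) (edge $v_Bw$ with $w\in P_t$) and of case (b) in the direction where $v_Bw_B$ is the \emph{outer} edge matches the paper's proof essentially step for step: same use of \cref{obs:relative} to sandwich $v_B$ between $v_x$ and $v_y$, same block-renaming of $v_x,w_x$, same pinning of $a\in(v_B^{i,i'},v_y^{i,i'})$ and $b\in(v_x^{j,j'},v_B^{j,j'})$, same exploitation of connectivity plus block-maximality to locate the right neighbor $v_x^{l,j^*}$ and force $l=j$, and the same boundary subcases $a=v_y^{i,1}$, $a=v_y^{i,i'}$ splitting further on $b\in C_y$ versus $b\in P_t$. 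All of that is correct and is exactly what the paper does.

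The gap is the final dismissal of the remaining case, where $v_Bw_B$ is the \emph{inner} edge of a nesting pair (i.e.\ $a\prec' v_B\prec' w_B\prec' b$), as ``symmetric to the ones treated.'' It is not. The asymmetry is baked into the construction: blocks are defined on $C_x$, the block $B_i$ is inserted consecutively just to the left of $v_y^{i,1}$, and $C_x$ always lies to the left of $B$ which lies to the left of $C_y$. Neither reversing $\prec'$ nor interchanging the roles of $C_x$ and $C_y$ turns the nested-by configuration into the nests configuration. Concretely, in the outer case you trap $a$ in $(v_B^{i,i'},v_y^{i,i'})$ and $b$ in $(v_x^{j,j'},v_B^{j,j'})$ and then run the case analysis on $a$; in the inner case the traps flip to $a\in(v_x^{i,i'},v_B^{i,i'})$ and $b\in(v_B^{j,j'},v_y^{j,j'})$, and the paper's appendix argument instead does a case analysis on $b$: either $b=v_y^{j,j'}$, or the edge $v_y^{i,i^*}v_y^{j,1}$ exists, in which case either that edge is also nested by $ab$ or $v_B^{i,i'}\prec' a$ — and only then does it split on $a\in C_y$ versus $a\in P_t$. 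This is a genuinely different chain of deductions, not a mirror image of the outer-edge argument, and waving it off as symmetric leaves the proof incomplete. You would need to write out that fourth case in full.
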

\begin{proof}[missing part]
\begin{figure}[p]
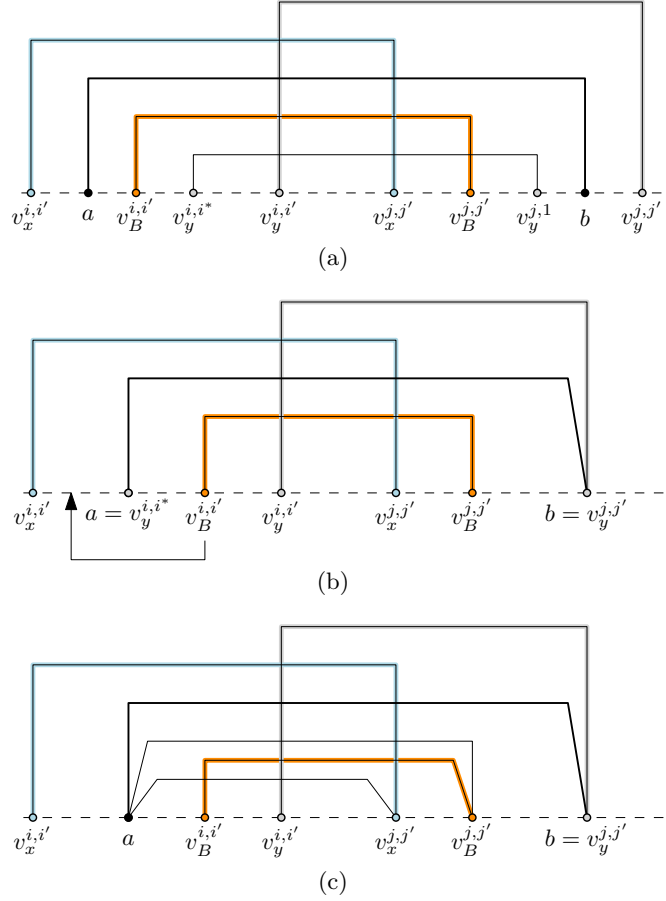

\centering
\subfigure[]{\includegraphics[page=4,scale=0.9]{figs/prec}\label{fig:prec-d}}
\subfigure[]{\includegraphics[page=5,scale=0.9]{figs/prec}\label{fig:prec-e}}
\subfigure[]{\includegraphics[page=6,scale=0.9]{figs/prec}\label{fig:prec-f}}
\caption{Illustration for the proof of \cref{lem:prec}: $v_B^{i,i'}v_B^{j,j'}$ is nested by an edge $ab$.}
\end{figure}
To conclude the proof of the lemma, we consider any edge $v_Bw_B$, with $v_B \prec' w_B$, and we assume for a contradiction that $v_Bw_B$ is nested by an edge $ab$. Again we can rename the vertices as $v_x = v_x^{i,i'}$ and $w_x = v_x^{j,j'}$, and similarly $v_B = v_B^{i,i'}$ and $w_B = v_B^{j,j'}$. By the position of $b$ we can deduce either that $b= v_y^{j,j'}$ (possibly $j'=1$) or that  edge $v_y^{i,i^*}v_y^{j,1}$ exists. In the latter case either $v_y^{i,i^*}v_y^{j,1}$ is also nested by $ab$ or $v_B^{i,i'} \prec' a$, and in both cases we obtain a contradiction; refer to \cref{fig:prec-d} for an illustration. In the former case, we should again distinguish whether $a \in C_y$ or $a \in P_t$. If $a \in C_y$, it should be $v_B^{i,i'} \prec a=v_y^{i,i^*}$, see \cref{fig:prec-e}. If $a \in P_t$, we would have $v_x^{i,i'} \prec' a \prec' v_y^{i,i'}$, which again contradicts the fact that $C_x$ and $C_y$ are equivalent components, see \cref{fig:prec-f}.
\end{proof}

\end{document}